\title{Operational Theories and Categorical Quantum Mechanics}
\author{Samson Abramsky and Chris Heunen \\
Department of Computer Science\\
University of Oxford}
\theoremstyle{plain}
\newtheorem{theorem}[equation]{Theorem}
\newtheorem{corollary}[equation]{Corollary}
\newtheorem{lemma}[equation]{Lemma}
\newtheorem{proposition}[equation]{Proposition}
\newcommand{\Rvalue}[1]{\ensuremath{\llbracket #1 \rrbracket}}
\newcommand{\beqa}{\begin{eqnarray*}}
\newcommand{\eeqa}{\end{eqnarray*}\par\noindent}
\newcommand{\WW}{\mathcal{W}}
\newcommand{\rarr}{\rightarrow}
\newcommand{\II}{I}
\newcommand{\ket}[1]{{|} #1\rangle}
\newcommand{\bra}[1]{\langle #1{|}}
\newcommand{\ie}{\textit{i.e.}\ }
\newcommand{\eg}{\textit{e.g.}\ }
\newcommand{\CC}{\mathbf{C}}
\newcommand{\CD}{\mathcal{D}}
\newcommand{\DD}{\mathbf{D}}
\newcommand{\Rel}{\mathbf{Rel}}
\newcommand{\FRel}{\mathbf{FRel}}
\newcommand{\Set}{\mathbf{Set}}
\newcommand{\Mat}{\mathbf{FMat}}
\newcommand{\Complex}{\mathbb{C}}
\newcommand{\LMat}{\mathbf{LMat}}
\newcommand{\Hilb}{\mathbf{Hilb}}
\newcommand{\FdHilb}{\mathbf{FHilb}}
\newcommand{\id}{\mathsf{id}}
\newcommand{\dom}{\mathrm{dom}}
\newcommand{\Cat}{\mathbf{Cat}}
\newcommand{\Nat}{\mathbb{N}}
\newcommand{\IFF}{\; \Longleftrightarrow \;}
\newcommand{\IMP}{\; \Rightarrow \;}
\newcommand{\ev}{v}
\newcommand{\End}{\mathrm{End}}
\newcommand{\Ideal}{\mathcal{I}}
\newcommand{\Tr}{\mathrm{Tr}}
\newcommand{\DS}{\Delta_{S}}
\newcommand{\DT}{\Delta_{T}}
\newcommand{\Chu}{\mathbf{Chu}}
\newcommand{\embsymb}{\iota}
\newcommand{\embS}{\embsymb^{\Sa}}
\newcommand{\embM}{\embsymb^{\MM}}
\newcommand{\One}{\mathbf{1}}
\newcommand{\natarrow}{\stackrel{\cdot}{\to}}
\newcommand{\dnatarrow}{\stackrel{\cdot\cdot}{\to}}
\newcommand{\Ct}{\CC_{\mathsf{t}}}
\newcommand{\op}{\mathrm{op}}
\newcommand{\KW}{\mathbf{K}_{\WW}}
\newcommand{\MM}{\mathsf{M}}
\newcommand{\KD}{\mathrm{K}_{\CD}}
\newcommand{\dev}{\mathsf{d}}
\newcommand{\Sa}{\mathsf{P}}
\newcommand{\La}{\Lambda}
\newcommand{\Stoch}{\mathbf{Stoch}}
\newcommand{\SStoch}{\mathbf{SStoch}}
\newcommand{\ql}{q^{\lambda}}
\newcommand{\ds}{d_s}
\newcommand{\dx}{\delta_x}
\newcommand{\ddo}{\delta_o}
\newcommand{\px}{p^{x}}
\newcommand{\Dbl}[1]{\ensuremath{#1_{\leftrightarrows}}}
\newcommand{\DagCat}{\mathbf{DagCat}}
\newcommand{\qes}{\bot}
\begin{document}

\maketitle

\begin{abstract}
  A central theme in current work in quantum information and quantum
  foundations is to see quantum mechanics as occupying one point in a
  space of possible theories, and to use this perspective to
  understand the special features and properties which single it out,
  and the possibilities for alternative theories. 
  Two formalisms which have been used in this context are
  \emph{operational theories}, and \emph{categorical quantum
    mechanics}. 
  The aim of the present paper is to establish strong connections
  between these two formalisms. 
  We  show how models of categorical quantum mechanics have
  representations as operational theories. 
  We then show how \emph{non-locality} can be formulated at
  this level of generality, and study a number of examples from this
  point of view, including Hilbert spaces, sets and relations, and
  stochastic maps. The local, quantum, and no-signalling models are characterized in these terms.
\end{abstract}

\section{Introduction}

A central theme in current work in quantum information and quantum
foundations is to see quantum mechanics as occupying one point in a
space of possible theories, and to use this perspective to understand
the special features and properties which single it out, and the
possibilities for alternative theories. 

Two formalisms which have been used in this context are
\emph{operational theories}
\cite{Mack63,jauch1968foundations,piron1976foundations,ludwig1983foundations},
and \emph{categorical quantum mechanics}
\cite{abramsky2004categorical,abramsky2008categorical}. 
\begin{itemize}
  \item Operational theories allow general formulations of results in
    quantum foundations and quantum information
    \cite{barnum2007generalized,barnum2008teleportation,barnum2010entropy}. They
    also play a prominent r\^ole in current work on axiomatizations of
    quantum mechanics
    \cite{hardy2001quantum,chiribella2011informational,masanes2011derivation,dakicbrukner:entanglement}. 

  \item Categorical quantum mechanics enables a high-level approach to
    quantum information and quantum foundations, which can be
    presented in terms of string-diagram representations of structures
    in monoidal categories \cite{abramsky2008categorical}. This has
    proved very effective in providing a conceptually illuminating and
    technically powerful perspective on a range of topics, including
    quantum protocols \cite{abramsky2004categorical}, entanglement
    \cite{coecke2010compositional}, measurement-based quantum
    computing \cite{duncan2010rewriting}, no-cloning
    \cite{abrnocloning}, and non-locality \cite{coecke2011phase}. 
\end{itemize}

The aim of the present paper is to establish strong connections between these two formalisms.
We shall begin by reviewing operational theories. We then show how a
proper formulation of \emph{compound systems} within the operational
framework leads to a view of operational theories as representations
of monoidal categories of a particular form. We call these
\emph{operational representations}. 

We then review some elements of categorical quantum mechanics, and
show how \emph{monoidal dagger categories}, equipped with a trace
ideal, give rise to operational representations. 
Thus there is a general passage from categorical quantum mechanics to operational theories.

We go on to show how \emph{non-locality} can be formulated at
this level of generality, and study a number of examples from this
point of view, including Hilbert spaces, sets and relations, and
stochastic maps. The local, quantum, and no-signalling models are characterized in these terms.

We shall assume some familiarity with the linear-algebraic formalism
of quantum mechanics, and with the first notions of category
theory. To make the paper reasonably self-contained, we include an
appendix which reviews the basic definitions of monoidal categories,
functors and natural transformations. 

We also include another appendix which proves a number of technical results on trace ideals.
These are mathematically interesting, but would break up the flow of ideas in the main body of the paper.

\section{Why operational theories?}

Before proceeding to a formal description of operational theories, it
may be useful to discuss the motivation for studying them. 

As we see it, operational theories have the following attractions: 
\begin{itemize}
  \item Firstly, they focus on the empirical content of theories, and
    the means by which we can gain knowledge of the microphysical
    world. Any viable theory must account for this content. 

  \item By focussing on this empirical and observational content,
    operational theories allow meaningful results to be formulated and
    proved about the `space of theories' as a whole. At a stage in the
    development of physics where the next step is far from clear, this
    is a useful perspective, which may prove useful in finding
    `deeper' theories. 

  \item Indeed, the operational framework has proved fruitful as  a
    basis for general results, \eg on the information processing
    capabilities of theories under various assumptions
    \cite{barnum2007generalized,barnum2008teleportation,barnum2010entropy};
    and provides the setting for recent work on axiomatic
    reconstructions of quantum mechanics
    \cite{hardy2001quantum,chiribella2011informational,masanes2011derivation,dakicbrukner:entanglement}. 
\end{itemize}

On  the debit side, operational theories attract criticism on philosophical grounds.
They are seen as linked to an `instrumentalist' or `epistemic' view of
physics, as opposed to a `realistic' approach. From our perspective,
the fact that we study operational theories does not indicate any such
philosophical commitment. Rather, they are pragmatically useful for
the reasons already mentioned, and can be seen as expressing some
irreducible minimum of empirical content, which will have to be
accounted for by any presumptive `deeper' theory. 

\section{Operational theories formalized}

An operational theory is formulated in terms of directly accessible
`operations', which can be performed \eg in a laboratory. 
We assume there are several different types of system, $A$, $B$, $C$, etc.
For each system type $A$, the theory specifies the following:
\begin{itemize}
  \item A set of \emph{preparations} $P_{A}$ which produce systems of
    that type. 
  \item A set $T_{A}$ of \emph{transformations} which may be performed
    on systems of type $A$. 
    More generally, we can consider transformations $T_{A, B}$ which can
    be performed on systems of type $A$ to produce systems of type $B$. 
  \item A set of \emph{measurements} $M_{A}$ which can be performed on
    systems of that type. 
\end{itemize}
Each measurement has a set of possible \emph{outcomes}.
In this paper, we shall only consider `finite-dimensional' theories,
or parts of theories. This means that each measurement has only
finitely many possible outcomes. For convenience, we shall assume a
fixed \textit{infinite} set of outcomes $O$, which will apply to all
measurements.  Any measurement with a finite set of outcomes $O'
\subseteq O$ can be represented using $O$, where those outcomes
outside $O'$ have zero probability of occurring. 

The empirical predictions of the theory are given by its
\emph{evaluation rule}, which is a function 
\[ \ev_{A} \colon P_{A}  \times M_A \times O \to [0, 1]  \]
which assigns a  probability $\ev_{A}(p, m, o)$ 
to the event that a system of type $A$, prepared by $p$, 
yields outcome $o$ when measurement $m$ is performed on it. 

For each choice of $p$ and $m$, the function $\ev_{A}(p, m, {-})$
defines a probability distribution on outcomes. 
We shall use the function
\[ d_A \colon P_A \times M_A \to \CD 
\qquad d_A(p, m) \colon o \mapsto \ev_A(p, m, o) \]
where $\CD$ is the set of probability distributions of finite support on $O$.


\subsection{Compound systems}

An important additional ingredient is to give an account of
\emph{compound systems}, \ie putting systems, possibly space-like
separated, together. 

This leads to the following additional requirements.
\begin{itemize}
  \item For each pair of system types $A$, $B$, a compound system type
    $AB$. 
  \item Ways of combining preparations, measurements, etc.\ on $A$ and
    $B$ to yield corresponding operations on the compound system
    $AB$. 
\end{itemize}
Moreover, these operations should be subject to axioms yielding a
coherent mathematical structure on these notions. 

Rather than trying to develop such `meta-operations' and axioms from
first principles, we see the essential elements as provided by
\emph{monoidal categories}, which have been  developed extensively as
a setting for quantum mechanics and quantum information in the
categorical quantum mechanics programme
\cite{abramsky2004categorical,abramsky2008categorical}. 

We shall therefore proceed by giving a precise formulation of
operational theories with compound system structure as a certain class
of representations of monoidal categories, which we call
\emph{operational representations}.  

\subsection{Operational representations: concrete description}
\label{conoprepsec}

Before giving the `official' definition of operational representation,
which is mathematically elegant but a little abstract, we shall give a
more concrete account, which shows the naturalness of the ideas, and
also indicates why guidance from category theory is helpful in finding
the right structural axioms. 

For each system type $A$, we can gather the relevant data provided by
an operational theory into a single structure 
\[ (P_A, M_A, d_A \colon P_A  \times M_A \to \CD) . \]
This immediately suggests the notion of \emph{Chu
  space}~\cite{Barr79,Chu79}, which has received quite extensive
development~\cite{DBLP:journals/apal/Pratt99}, and was applied to the
modelling of physical systems in~\cite{btm}. 
Indeed, it can be seen as a generalization of the notion of model of a
physical system proposed by Mackey in his influential work on the
foundations of quantum mechanics \cite{Mack63}. 

There is a natural equivalence relation on preparations: $p$ is
equivalent to $p'$, where $p, p' \in P_A$, if for all $m \in M_A$: 
\[ d_A(p, m) = d_A(p', m) . \]
This is exactly the notion of extensional equivalence in Chu
spaces~\cite{btm}. We can regard \emph{states} operationally as
equivalence classes of preparations~\cite{peres1993quantum}. 

In an entirely symmetric fashion, there is an equivalence relation on
measurements. We define $m$ to be equivalent to $m'$, where $m, m' \in
M_A$, if for all $p \in P_A$: 
\[ d_A(p, m) = d_A(p, m') . \]
We can regard \emph{observables} operationally as equivalence classes
of measurements. 

Quotienting  an operational system $(P_A, M_A, d_A)$ by these
equivalences corresponds to the \emph{biextensional collapse} of a Chu
space~\cite{btm}. 

Having identified operational systems with Chu spaces, we now turn to
morphisms. A transformation in $T_{A, B}$ induces a map $f_* \colon
P_A \to P_B$. That is, preparing a system of type $A$ according to
preparation procedure $p$, and then subjecting it to a transformation
procedure $t$ resulting in a system of type $B$, is itself a procedure
for preparing a system of type $B$. 

Such a transformation can also be seen as a procedure for converting
measurements of type $B$ into measurements of type $A$: given a
measurement $m \in M_B$, to apply it to a state prepared by $p \in
P_A$, we apply the transformation $t$ to obtain a preparation of type
$B$, to which $m$ can be applied. 
Thus we can also associate  a map $f^* \colon M_B \to M_A$ to the
transformation $t$. 
The formal relationship that links the two maps $f_*$ and $f^*$ is
that, whether we measure $f_*(p)$ with $m$, or $p$ with $f^*(m)$, we
should observe the same probability distribution on outcomes: 
\begin{equation}
\label{chumcond}
d_B(f_*(p), m) = d_A(p, f^*(m)) . 
\end{equation}
This can be seen as an abstract form of the relationship between the
Schr\"odinger and Heisenberg `pictures' of quantum dynamics.  

The equation~(\ref{chumcond})  says exactly that the pair of maps
$(f_*, f^*)$ defines a morphism of Chu spaces  
\[ (f_*, f^*) \colon (P_A, M_A, d_A) \to (P_B, M_B, d_B) . \]
 Thus we see that in an entirely natural way, we can associate an
 operational theory with a sub-category of Chu spaces, more precisely
 of $\Chu(\Set, \CD)$ \cite{DBLP:journals/apal/Pratt99}. 
This sub-category will not in general be full, since not every Chu
morphism will arise from a transformation in the theory. 

However, this does not yet provide an account of compound systems. 
While Chu spaces have a standard monoidal structure, and indeed form
$*$-autonomous categories \cite{Chu79}, we should not in general
expect that operational theories will give rise to \emph{monoidal}
sub-categories of Chu spaces. Rather, we should see the notion of
compound system as an important degree of freedom, which is to be
specified by the theory. 

Thus given operational systems $A = (P_A, M_A, d_A)$ and $B = (P_B,
M_B, d_B)$, we should be able to form a system $A\otimes B = (P_{A
  \otimes B}, M_{A \otimes B}, d_{A \otimes B})$. 

What general properties should such a notion satisfy? 
One important requirement, which appears in one form or another in the
various formulations of operational theories, is to have an
\emph{inclusion of pure tensors}.  
This is given by maps
\[ 
\embS_{A, B} \colon P_A \times P_B \to P_{A \otimes B} , \qquad
\embM_{A, B} \colon M_A \times M_B \to   M_{A \otimes B}.
\]
For readability, we shall write $p \otimes p'$ rather than $\embS_{A,
  B}(p, p')$, and similarly for measurements. 

The fundamental property which this inclusion must satisfy relates to
the evaluation. For all $p \in P_A$, $p' \in P_B$, $m \in M_A$, $m'
\in M_B$, we must have: 
\begin{equation}
\label{moninctens}
d_{A \otimes B}(p \otimes p', m \otimes m') = d_A(p, m) \cdot d_B(p', m') . 
\end{equation}
This expresses the probabilistic independence of pure tensors. Conceptually, pure tensors arise by preparing states or performing measurements independently on subsystems.

In addition, there are a number of coherence conditions which are
needed to get a mathematically robust notion. Rather than writing
these down in an ad hoc fashion, we shall now turn to a more
systematic way of defining the categorical structure of operational
theories, in which these conditions arise naturally from standard
notions. 

\subsection{Operational representations: functorial formulation}

We shall now take a different view, in which the structure of an
operational theory arises from a symmetric monoidal category, which we
think of as a \emph{process category}. The operational theory will
amount to a certain form of \emph{representation} of this process
category. The receiving category for the  representation will be
$(\Set, {\times}, \One)$, viewed as a symmetric monoidal category.  

Given a symmetric monoidal category $\CC$, an operational
representation of $\CC$ is specified by the following data: 
\begin{itemize}
  \item A symmetric monoidal sub-category $\Ct$ of $\CC$. This will
    usually have the same objects as $\CC$, and only those morphisms
    which correspond to admissible transformations. 
  \item A symmetric monoidal functor $\Sa \colon \Ct \to \Set$
    which represents, for each object $A$ of $\Ct$, viewed as a type
    of system, the corresponding set of preparations or states. 
  \item A contravariant symmetric monoidal functor $\MM \colon
    \Ct^{\op} \to \Set$ which for each $A$ represents the measurements
    on $A$. Note that $\Ct^{\op}$ is a symmetric monoidal category. 
  \item A dinatural symmetric monoidal transformation 
    \[ \dev \colon \Sa \times \MM \dnatarrow \KD   \]
    which gives the evaluation rule of the theory.
    Here $\KD$ is the constant functor valued at $\CD$.
    Note that a constant symmetric monoidal functor valued at a set
    $M$  is just a commutative monoid $(M, {\cdot}, 1)$ in $\Set$. We
    take $\CD$ to be a commutative monoid under pointwise
    multiplication. 
\end{itemize}
We shall assume that the functors $\Sa$, $\MM$ are embeddings, \ie
injective on objects and faithful. 

Let us now unpack this definition.

\begin{itemize}
  \item The general point of view is that the structure of the
    operational theory is controlled by the `abstract' category
    $\CC$. The types of the theory are the objects of $\Ct$. 
  \item Rather than a single set  of preparations, we have a \emph{variable set} $\Sa$, which for each
    type $A$ gives us a set $\Sa_A$. Moreover, this acts
    functorially on the admissible transformations $f \colon A \to B$
    in $\Ct$ to produce functions $f_* \colon \Sa_A \to \Sa_B$,
    where $f_* := \Sa(f)$. 
    Thus these functions take preparations on $A$ to preparations on
    $B$, as already discussed. 
  \item Similarly, the functor $\MM$ specifies a variable set $\MM_A$ of measurements for each system type
    $A$. The contravariant action of this functor is again as expected
    from our previous discussion. 
\end{itemize}

The first new ingredient which picks up the issue of monoidal
structure is that $\Sa$ and $\MM$ are required to be \emph{monoidal}
functors. 
The fact that $\Sa$ and $\MM$ are  monoidal means that there are
natural transformations 
\[  
\embS_{A, B} \colon \Sa_A \times \Sa_B \to \Sa_{A \otimes B} , \qquad
\embM_{A, B} \colon \MM_A \times \MM_B \to   \MM_{A \otimes B}.
\]
\ie inclusions of pure tensors. Naturality means that the diagrams
\[
\xymatrix{\Sa_A \times \Sa_B \ar_-{f_* \times g_*}[d]
  \ar^-{\embS_{A,B}}[r] & \Sa_{A \otimes B} \ar^-{(f \otimes g)_*}[d] \\
  \Sa_{A'} \times \Sa_{B'} \ar_-{\embS_{A',B'}}[r] & \Sa_{A' \otimes
    B'}}
\qquad
\xymatrix{\MM_A \times \MM_B \ar^-{\embM_{A,B}}[r] 
  & \MM_{A \otimes B} \\
  \MM_{A'} \times \MM_{B'} \ar^-{f^* \times g^*}[u] \ar_{\embM_{A',B'}}[r]
  & \MM_{A' \otimes B'}  \ar_-{(f \otimes g)^*}[u]
}
\]
commute. The coherence conditions for monoidal natural transformations
complete the required properties of pure tensors. 

The dinatural transformation $\dev_{A} \colon \Sa_A \times \MM_A \to
\CD$ represents the evaluation function. Dinaturality says that for
each admissible transformation $f \colon A \to B$:
\[\xymatrix{
  & \Sa_B \times \MM_B \ar^-{\dev_B}[dr] \\
  \Sa_A \times \MM_B \ar^-{f_* \times 1_B}[ur] \ar_-{1_A \times f^*}[dr] 
  && \CD \\
  & \Sa_A \times \MM_A \ar_-{\dev_A}[ur]
}\]
Thus we see that dinaturality is exactly the Chu morphism condition~(\ref{chumcond}).
Monoidality of $\dev$ is the equation~(\ref{moninctens}).

\subsection{Operational categories}

If we are given an operational representation $(\CC, \Ct, \Sa, \MM,
\dev)$ we can construct from this a single category, recovering the picture
given in Section~\ref{conoprepsec}. 

For each object $A$ of $\CC$, we have the Chu space $(\Sa_A, \MM_A,
\dev_A)$. By dinaturality of $\dev$, each morphism $f \colon A \to B$
gives rise to a Chu morphism  
\[ (f_*, f^*) \colon (\Sa_A, \MM_A, \dev_A) \to (\Sa_B, \MM_B, \dev_B) . \]
By functoriality of $\Sa$ and $\MM$, we obtain a sub-category of Chu
spaces. 

Moreover, since $\Sa$ and $\MM$ are embeddings, we can push the
symmetric monoidal structure on $\CC$ forward to this sub-category: 
\[ \Sa_A \otimes \Sa_B := \Sa_{A \otimes B}, \quad \MM_A \otimes \MM_B := \MM_{A \otimes B}, \quad f_* \otimes f'_* := (f \otimes f')_*, \quad f^* \otimes f'^* := (f \otimes f')^{*} . \]

Thus we obtain a symmetric monoidal category, whose underlying
category is a sub-category of Chu spaces. We call this the
\emph{operational category} arising from the operational
representation. 

\subsection{Generalized representations}

The structural properties of operational representations and
categories are independent of the particular choice of the monoid
$\CD$ used in specifying the dinatural transformation $\dev$. 

We shall define a \emph{generalized operational representation with
  weights $\WW$}, where $(\WW, {\cdot}, 1)$ is a commutative monoid
with a zero element, to be a tuple $(\CC, \Ct, \Sa, \MM, \dev)$, where
$\dev$ now has the form 
\[ \dev \colon \Sa \times \MM \dnatarrow \KW   \]
and $\KW$ is the constant symmetric monoidal functor valued at
$\WW$. This yields the definition of operational representation given previously when $\WW = \CD$.

We now have a general scheme for representing symmetric monoidal
categories as operational categories. So far, however, we have no
examples. We shall now show how monoidal dagger categories give rise
to operational representations in a canonical fashion, following the
ideas of categorical quantum mechanics~\cite{abramsky2008categorical}. 

\section{Monoidal dagger categories}

Monoidal dagger categories are the basic structures used in
categorical quantum mechanics~\cite{abramsky2008categorical}. 
We shall briefly review the definitions, and give a number of examples.

A \emph{dagger category} is a category $\CC$ equipped with an
identity-on-objects, contravariant, strictly involutive
functor. Concretely, for each arrow $f \colon A \to B$, there is an
arrow $f^{\dagger} \colon B \to A$, and this assignment satisfies: 
\[ 1^{\dagger} = 1, \qquad (g \circ f)^{\dagger} = f^{\dagger} \circ g^{\dagger}, \qquad f^{\dagger\dagger} = f \, . \]
We define an arrow $f \colon A \to B$ in a dagger category to be a
\emph{dagger-isomorphism} if: 
\[ f^{\dagger} \circ  f = 1_A, \qquad f \circ f^{\dagger} = 1_B . \]
A \emph{symmetric monoidal dagger category} is a dagger category with
a symmetric monoidal structure $(\CC , \otimes , \II, \lambda, \rho,
\alpha, \sigma)$ such that 
\[ (f \otimes g)^{\dagger} = f^{\dagger} \otimes g^{\dagger} 
\]
and moreover the natural isomorphisms $\lambda$, $\rho$, $\alpha$,
$\sigma$ are componentwise dagger-isos. 

\paragraph{Examples}

\begin{itemize}
  \item The category $\Hilb$ of 
    Hilbert spaces and bounded linear
    maps, and its (full) sub-category $\FdHilb$ of finite-dimensional
    Hilbert spaces. Here the dagger is the adjoint, and the tensor
    product has its standard interpretation for Hilbert spaces. 
    More generally, any symmetric monoidal C*-category is an
    example~\cite{ghezlimaroberts:wstarcategories,
      doplicherroberts:duality}. This includes categories of (right)
    Hilbert C*-modules, which are Hilbert spaces whose 
    inner product takes values in an arbitrary C*-algebra instead of
    $\Complex$.  

  \item The category $\Rel$ of sets and relations. Here the dagger is
    relational converse, while the monoidal structure is given by the
    cartesian product. This generalizes to relations valued in a
    commutative quantale~\cite{rosenthal:quantales}, and to the
    category of relations for any regular category~\cite{Butz}. Small
    categories as objects and profunctors as morphisms behave very
    similarly to $\Rel$, even though they only form a bicategory~\cite{benabou:prof}.

  \item A common generalization of $\FdHilb$ and $\FRel$, the category
    of finite sets and relations, is obtained by forming the category
    $\Mat(S)$, where $S$ is a commutative semiring with
    involution. $\Mat(S)$ has finite sets as objects, and maps $X
    \times Y \to S$ as morphisms, which we think of as `$X$ times $Y$
    matrices'. Composition is by matrix multiplication, while the
    dagger is conjugate transpose, where conjugation of a matrix means
    elementwise application of the involution on $S$. The tensor
    product of $X$ and $Y$ is given by $X \times Y$, with the action
    on matrices given by componentwise multiplication. (This
    corresponds to the `Kronecker product' of matrices). If we take $S
    = \Complex$, this yields a category equivalent to $\FdHilb$, while
    if we take $S$ to be the Boolean semiring $\{0, 1\}$ (with trivial
    involution), we get $\FRel$. 

  \item An infinitary generalization of $\Mat(\Complex)$ is given by
    $\LMat$. This category has arbitrary sets as objects, and as
    morphisms matrices $M\colon X \times Y \to \Complex$ such that for
    each $x \in X$, the family $\{ M(x, y) \}_{y \in Y}$ is
    $\ell_2$-summable; and for each $y \in Y$, the family $\{ M(x, y)
    \}_{x \in X}$ is $\ell_2$-summable. $\Hilb$ is equivalent to a
    (non-full) sub-category of $\LMat$. 


  \item If $\CC$ and $\DD$ are symmetric monoidal dagger
    categories, then so is the category $[\CC, \DD]$ of functors $F
    \colon \CC \to \DD$ that preserve the dagger. Morphisms are
    natural transformations. 
    This accounts for several interesting models. For example, setting
    $\DD=\FdHilb$ and letting $\CC$ be a group, we obtain the category
    of unitary representations. Any topological or conformal quantum
    field theory is a sub-category of the case where $\DD=\FdHilb$
    and $\CC$ is the category of cobordisms~\cite{kock:frobenius,
      atiyah:tqft, segal:cft}. 
    Letting $\CC$ be the discrete category $\mathbb{N}$, and letting
    $\DD$ be either $\FdHilb$ or $\FRel$, we recover $\Mat(\DD(I,I))$.
\end{itemize}

\paragraph{The doubling construction}

All of the above examples are variations on the theme of matrix categories.
Indeed, it seems hard to find natural examples which are not of this form.
However, there is a construction which produces a symmetric monoidal dagger category from \emph{any} symmetric monoidal category. Although the construction is formal, it is interesting in our context since it can be seen as a form of \emph{quantization}; it converts classical process categories into a form in which quantum constructions are meaningful.

Given a category $\CC$, we define a dagger category $\Dbl{\CC}$ as
follows. The objects are the same as those of $\CC$, and a morphism
$(f, g) \colon A \rarr B$ is a pair of $\CC$-morphisms $f \colon A
\rarr B$, $g \colon B \rarr A$. Composition is defined componentwise;
while $(f, g)^{\dagger} = (g, f)$. 
This is in fact the object part of the right adjoint to the evident
forgetful functor $\DagCat \rarr \Cat$; see~\cite[3.1.17]{heunen:thesis}. Thus for each
dagger category $\CC$, there is a dagger functor $\eta_{\CC} \colon
\CC \rarr \Dbl{\CC}$ which is the identity on objects, and sends
$f$ to $(f, f^{\dagger})$. This has the universal property with
respect to dagger functors $\CC \rarr \Dbl{\DD}$ for categories
$\DD$. 

This cofree construction of a dagger category lifts to the level of
symmetric monoidal categories. 
If $\CC$ is a symmetric monoidal category, then $\Dbl{\CC}$ is a
symmetric monoidal dagger category, with the monoidal structure
defined componentwise: thus $(f, g) \otimes (h, k) := (f \otimes h, g
\otimes k)$. Note in particular that the structural isos in $\CC$ turn
into \emph{dagger} isos in $\Dbl{\CC}$. 

\subsection{Additional structure}
\label{subsec:additionalstructure}

We shall require two further structural ingredients.
The first is \emph{zero morphisms}: for each pair of objects $A$, $B$, a morphism $0_{A, B} \colon A \to B$ such that, for all $f \colon C \to A$ and $g \colon B \to D$,
\[ 0_{A, B} \circ f = 0_{C, B}, \qquad g \circ 0_{A, B} = 0_{A, D} . \]
Note that if zero morphisms exist, they are unique. 

In the context of symmetric monoidal dagger categories, we further require that
\[ f \otimes 0 = 0 = 0 \otimes g, \qquad 0^{\dagger} = 0 . \]

\paragraph{Examples} All the examples of symmetric monoidal dagger
categories given above have zero morphisms in an evident
fashion. Functor categories have componentwise zero morphisms. Zero
morphisms in $\Dbl{\CC}$ are pairs of zero morphisms in $\CC$.
For more examples, see~\cite{heunenjacobs:dagkercats}.

The final ingredient we shall require is a \emph{trace ideal} in the 
sense of \cite{ abramskyetal:nuclear}.\footnote{Strictly speaking, we
  are defining the more restricted notion of \emph{global trace} of an
  endomorphism, rather than a \emph{parameterized trace} as in~\cite{
    abramskyetal:nuclear}. This restricted notion is all we shall need.} 
Firstly, we recall that in any monoidal category, the \emph{scalars},
\ie the endomorphisms of the tensor unit $I$, form a commutative
monoid \cite{kellylaplaza:compactcategories}. 

An \emph{endomorphism ideal} in a symmetric monoidal category
$\CC$ is specified by a set $\Ideal(A) \subseteq \End(A)$ for each
object $A$, where $\End(A) = \CC(A, A)$ is the set of endomorphisms on
$A$. This is subject to the following closure conditions: 
\[ g \colon A \to B, f  \in \Ideal(A), h \colon B \to A \IMP g \circ f \circ h \in \Ideal(B)  \]
\[ f \in \Ideal(A), g \in \Ideal(B) \IMP f \otimes g \in \Ideal(A \otimes B), \qquad \Ideal(I) = \End(I)  \]
\[ 0 \in \Ideal(A) . \]
If $\CC$ is a dagger category, $\Ideal$ is a \emph{dagger endomorphism
  ideal} when additionally
\[ f \in \Ideal(A) \IMP f^{\dagger} \in \Ideal(A), \]
but we will also call these endomorphism ideals for short.
 A \emph{trace ideal} is an endomorphism ideal $\Ideal$, together with a function
\[ \Tr_A \colon \Ideal(A) \to \End(I) \]
for each object $A$, subject to the following axioms:
\[ \Tr_A(g \circ f) = \Tr_B(f \circ g) \qquad(f \colon A \to B, g \colon B \to A, g \circ f \in \Ideal(A), f \circ g \in \Ideal(B)) \]
\[ \Tr_{A \otimes B}(f \otimes g) = \Tr_A(f) \Tr_B(g), \qquad \Tr_{I}(s) = s. \]
A \emph{dagger trace ideal} additionally satisfies
\[ \Tr_A(f^{\dagger}) = \Tr_A(f)^{\dagger}, \]
but we will also call these trace ideal for short.
 We call a morphism $f \in \Ideal(A)$ \emph{trace class}.

\paragraph{Examples}
All of the examples given above have trace ideals. In the case of
finite matrices, the usual matrix trace is a total operation. In the
case of $\Hilb$, we interpret trace class in the standard sense for
Hilbert spaces, and similarly for $\LMat$.
Through the
GNS-embedding~\cite[Proposition~1.14]{ghezlimaroberts:wstarcategories},
this also provides a trace ideal for any C*-category.  

In the case of relations, the summation over the diagonal becomes a
supremum in a complete semilattice, which is always defined. 

Any symmetric monoidal dagger sub-category of $[\CC,\DD]$ inherits
endomorphism ideals and zero 
morphisms from $\DD$ componentwise, and has a trace function $\Tr(\alpha) = \bigvee_A
\Tr(\alpha_A)$ as soon as $\DD(I,I)$ has an operation $\bigvee$
satisfying $\bigvee_A s_A^\dag = (\bigvee_A s_A)^\dag$, $\bigvee_A s =
s$, and $(\bigvee_A s_At_A) = (\bigvee_A s_A)(\bigvee_A t_A)$, where
$A$ ranges over the objects of $\CC$. This is the case when $\CC$ is a
finite group, as well as for topological quantum field theories.

The doubling construction turns trace ideals into dagger trace
ideals. For $(f, g) \colon A \rarr A$, define $(f, g) \in \Ideal(A)$ if and only if $f \in
\Ideal(A)$ and $g \in \Ideal(A)$, and $\Tr_A(f, g) = (\Tr_A(f),
\Tr_A(g))$. 
Thus if $\CC$ is a symmetric monoidal category with zero morphisms and
a trace ideal, $\Dbl{\CC}$ is a dagger category
with the same structure. 


In Appendix B, we prove a number of results about trace ideals:
\begin{itemize}
\item We characterize when trace ideals
exist, and to what extent they are unique.
\item We show that we really 
need to restrict to ideals to consider traces: the category of Hilbert
spaces does not support a trace on all morphisms. 
\item As a 
corollary, we derive that dual objects in the category of Hilbert
spaces are necessarily finite-dimensional. 
\item Finally, we prove
in some detail that the category of 
Hilbert spaces indeed has a
trace ideal; the details turn out to be quite subtle. 
\end{itemize}

This material would have unduly interrupted the main flow of the paper, but is of mathematical interest in its own right.

\section{From categorical quantum mechanics to operational categories}
\label{cqmoprepsec}

Let $\CC$ be a symmetric monoidal dagger category with zero morphisms
and a trace ideal. We shall show that $\CC$ gives rise to an
operational representation and operational category in a canonical
fashion, directly inspired by quantum mechanics. 

\subsection{Transformations}

We take $\Ct$ to be the sub-category with the same objects as $\CC$,
and with dagger-isomorphisms as arrows. This is a groupoid, \ie all
morphisms are invertible. 

It is easily seen to be a monoidal dagger sub-category of $\CC$.

\subsection{States}

A morphism $f \in \End(A)$ in a dagger category is \emph{positive} if
for some $g \colon A \to B$, $f = g^{\dagger} \circ g$. We define a
\emph{state} on $A$ to be a positive morphism $f \in \End(A)$ which is
trace class, and such that $\Tr_A(f) = 1$. We write $\Sa_A$ for the
set of states on $A$. 

In $\Hilb$, this definition yields exactly the standard notion  of
\emph{density operator} as used in quantum mechanics.  

Pure states can also be defined in this setting. An arrow $\psi \colon
I \to A$ has \emph{unit norm} if $\psi^{\dagger} \circ \psi =
1$. Given such an arrow, $\psi \circ \psi^{\dagger} \in
\Sa_A$. Indeed, this arrow is clearly positive, and 
\[ \Tr_A(\psi \circ \psi^{\dagger}) = \Tr_I(\psi^{\dagger} \circ \psi) = \Tr_I(1) = 1 \]
using our assumption on $\psi$ and the axioms for the trace. 

Given a dagger isomorphism $f \colon A \to B$ in $\CC$,
the function $f_* \colon \Sa_A \to \Sa_B$ is defined by
\[ f_* \colon s \mapsto f \circ s \circ f^{\dagger} . \]
Functoriality holds, since
\[ g_* \circ f_*(s) = g \circ (f \circ s \circ f^{\dagger}) \circ g^{\dagger} = (g \circ f) \circ s \circ (g \circ f)^{\dagger} = (g \circ f)_* (s). \]
Inclusion of pure tensors is given by
\[ \embS_{A,B} \colon (s, t) \mapsto s \otimes t . \]
It is straightforward to check the coherence conditions. 

\subsection{Measurements}
\label{meassec}

A \emph{dagger idempotent}, or \emph{projector}, on $A$ is an arrow $P
\in \End(A)$ such that 
\[ P^2 = P, \qquad P = P^{\dagger} . \]
A family $\{ f_i \}_{i \in I}$ of endomorphisms on $A$ is: 
\begin{itemize}
  \item \emph{Pairwise disjoint} if $f_i \circ f_j = 0$, $i \neq j$;
  \item \emph{Jointly monic} if for all $g, h \colon B \to A$:
    \[ [\, \forall i \in I. \, f_i \circ g = f_i \circ h \,] \IMP g = h . \]
\end{itemize}
A \emph{projective measurement} on $A$ with finite set of outcomes $O'
\subseteq O$ is a family of dagger idempotents $\{ P_o \}_{o \in O'}$
on $A$ which is pairwise disjoint and jointly monic. We take $\MM_A$
to be the set of projective measurements on $A$. 

The functorial action of the measurement functor on dagger
isomorphisms $f \colon A \to B$ in $\CC$ is defined by 
\[ f^*(P_o) = f^{\dagger} \circ P_o \circ f . \]
It is easily verified that $f^*$ preserves disjointness and joint
monicity of families of projectors, and hence carries projective
measurements to projective measurements. 
Functoriality  is also easily verified.

Inclusion of tensors is defined pointwise on projectors:
\[ \embS_{A,B} \colon (P_o, P_{o'}) \mapsto P_o \otimes P_{o'}. \]
Note that the combined measurement will have a finite set of outcomes
which, perhaps with some relabelling, can be regarded as a subset of
$O$. 

\subsection{Evaluation}

The transformation $\dev$ is defined as follows, where $s \in \Sa_A$,
and $m = \{ P_o \}_{o \in O'} \in\MM_A$: 
\[ \dev_{A}(s, m)(o) \; := \; \left\{ \begin{array}{ll}
\Tr_{A}(s \circ P_{o}), & o \in O' \\
0, & \mbox{otherwise.}
\end{array}
\right.
\]
Note that $\dev$ is valued in the commutative monoid of scalars $\WW
:= \End(I)^O$. By the assumption of zero morphisms, this monoid has a
zero element. 

The dinaturality of this transformation, \ie the Chu morphism
condition, is just: 
\[ \Tr_B(f \circ s \circ f^{\dagger} \circ P_o) = \Tr_A(s  \circ f^{\dagger} \circ P_o \circ f) . \]
The monoidality of $\dev$ is verified as follows: 
\begin{align*}
  \dev_{A \otimes B}(s \otimes s', m \otimes m')(o, o')
  & = \Tr_{A \otimes B}(s \otimes s' \, \circ \, P_o \otimes P_{o'}) \\
  & = \Tr_{A \otimes B}(s \circ P_o \, \otimes \, s' \circ P_{o'}) \\
  & = \Tr_A(s \circ P_o) \Tr_B(s' \circ P_{o'})  \\
  & = \dev_A(s, m)(o) \cdot \dev_B(s', m')(o') . 
\end{align*}

\subsection{The canonical operational representation}

We collect the constructions described in this section together. Given
a symmetric monoidal dagger category $\CC$ with zero morphisms and a trace
ideal, we have defined 
a sub-category $\Ct$, monoidal functors $\Sa$ and $\MM$, and a dinatural transformation $\dev$.

\begin{proposition}
  The tuple $(\CC, \Ct, \Sa, \MM, \dev)$ is an operational
  representation with weights $\WW$. We call this the \emph{canonical
  operational representation} of $\CC$. The corresponding
  operational category is the \emph{canonical operational category}
  for $\CC$. 
  \qed
\end{proposition}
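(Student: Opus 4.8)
The proposition packages together the verifications made in this section, so the plan is to run through the clauses in the definition of a generalized operational representation with weights and, for each, point to the construction that witnesses it, supplying the few routine details left implicit.

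To begin, $\Ct$ is the subcategory of $\CC$ with the same objects and the dagger-isomorphisms as morphisms; it is a symmetric monoidal dagger subcategory because $1$ is a dagger-iso, composites of dagger-isos are dagger-isos, $f\otimes g$ is a dagger-iso whenever $f$ and $g$ are (using $(f\otimes g)^{\dagger}=f^{\dagger}\otimes g^{\dagger}$ and functoriality of $\otimes$), and $\lambda,\rho,\alpha,\sigma$ are componentwise dagger-isos by hypothesis. For $\Sa$, functoriality of $f\mapsto f_{*}$ and naturality and coherence of $\embS$ were checked above; one must still observe that $f_{*}$ really lands in $\Sa_{B}$ --- if $s=g^{\dagger}\circ g$ then $f\circ s\circ f^{\dagger}=(g\circ f^{\dagger})^{\dagger}\circ(g\circ f^{\dagger})$ is positive, it is trace class by the ideal closure conditions, and $\Tr_{B}(f\circ s\circ f^{\dagger})=\Tr_{A}(s\circ f^{\dagger}\circ f)=\Tr_{A}(s)=1$ --- and that $s\otimes t\in\Sa_{A\otimes B}$ for $s\in\Sa_{A}$, $t\in\Sa_{B}$, using $(g\otimes h)^{\dagger}=g^{\dagger}\otimes h^{\dagger}$, the ideal axiom for $\otimes$, and $\Tr_{A\otimes B}(s\otimes t)=\Tr_{A}(s)\,\Tr_{B}(t)=1$. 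The unit datum $\One\to\Sa_{I}$ selects the scalar $1$, using that $\Tr_{I}(s)=s$ forces a state on $I$ to be $1$, so $\Sa_{I}=\{1\}$. The functor $\MM$ is treated dually: $f^{*}P=f^{\dagger}\circ P\circ f$ is again a dagger idempotent for $f$ a dagger-iso (idempotency from $f\circ f^{\dagger}=1$, self-adjointness from $f^{\dagger\dagger}=f$), $f^{*}$ preserves pairwise disjointness and joint monicity as already noted, pointwise tensors of measurements are again measurements (disjointness from $f\otimes 0=0$, joint monicity from that of the two factor families), and the unit datum is the one-outcome measurement given by the projector $1$ on $I$.

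It remains to handle $\dev$ and the weights. Dinaturality of $\dev$ is the identity $\Tr_{B}(f\circ s\circ f^{\dagger}\circ P_{o})=\Tr_{A}(s\circ f^{\dagger}\circ P_{o}\circ f)$, which follows from cyclicity of the trace together with $f^{\dagger}\circ f=f\circ f^{\dagger}=1$, and monoidality is the computation already displayed, using $\Tr_{A\otimes B}(f\otimes g)=\Tr_{A}(f)\,\Tr_{B}(g)$ and the interchange law; moreover $\WW=\End(I)^{O}$ is a commutative monoid under pointwise multiplication --- commutativity of $\End(I)$ being the cited fact about scalars --- with zero element the constant family at $0_{I,I}$, so $\KW$ is a bona fide constant symmetric monoidal functor equipped with a zero. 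Finally one records that $\Sa$ and $\MM$ are injective on objects and faithful, as required of an operational representation; with every clause met, the tuple is an operational representation with weights $\WW$. I expect the parts needing real attention to be the bookkeeping of the monoidal coherence axioms for $\Sa$ and $\MM$ (compatibility of $\embS$, $\embM$ and the unit data with $\lambda,\rho,\alpha,\sigma$) and the embedding clause; everything else is a direct transcription of identities already established above.
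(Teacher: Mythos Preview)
Your proposal is correct and takes essentially the same approach as the paper: the paper's proof is a bare \qed, indicating that the proposition simply packages the verifications scattered through the preceding subsections, and that is precisely what you do. You supply somewhat more detail than the paper (e.g.\ explicitly checking that $f_{*}(s)$ and $s\otimes t$ are again states, and that $\Sa_{I}=\{1\}$), and you rightly flag the monoidal coherence bookkeeping and the embedding clause as the only places where genuine care is needed; the paper leaves all of this implicit.
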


We say that the canonical representation is \emph{distributional} if
the monoid of scalars $\End(I)$ has an addition making it a commutative
semiring, and for each state $s \in \Sa_A$ and measurement  
$m \in\MM_A$:
\begin{equation}
\label{disteq}
\sum_{o \in O} \dev_A(s,m)(o) \; = \; 1 . 
\end{equation}
We say that it is \emph{probabilistic} if moreover the image of $\dev$
embeds into the semiring of non-negative reals. 

\section{Examples of operational categories}

We shall now examine the operational categories arising from various
examples of symmetric monoidal dagger categories. 

\subsection{Hilbert spaces}

The definitions of states, measurements and evaluation are directly
inspired by those used in the standard Hilbert-space formulation of
quantum mechanics. Thus it is immediate that the states in the
canonical representation for $\Hilb$ are the density matrices, while
the dagger-isomorphisms are the unitary transformations. 

For measurements, we have the following result.

\begin{proposition}
  Measurements in $\Hilb$ have exactly their standard meaning. 
  More precisely, observables with finite discrete spectra  correspond
  exactly to the interpretation in $\Hilb$ of the abstract notion of
  measurements as defined in Section~\ref{meassec} for dagger
  categories. 
\end{proposition}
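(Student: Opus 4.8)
The plan is to unfold the abstract definition from Section~\ref{meassec} in the concrete category $\Hilb$, and show that a projective measurement in the abstract sense is exactly a finite family of pairwise-orthogonal projections summing to the identity, i.e.\ precisely the spectral data of a self-adjoint operator with finite spectrum. First I would observe that a \emph{dagger idempotent} in $\Hilb$ is by definition an operator $P$ with $P^2 = P$ and $P = P^{\dagger}$, which is exactly an orthogonal projection onto a closed subspace; this is standard. So a projective measurement $\{P_o\}_{o \in O'}$ is a finite family of orthogonal projections, subject to the two conditions \emph{pairwise disjoint} and \emph{jointly monic}.

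Next I would translate those two conditions. Pairwise disjointness, $P_o \circ P_{o'} = 0$ for $o \neq o'$, says the ranges of distinct projections are mutually orthogonal (using $P_o P_{o'} = 0 \iff \mathrm{ran}\,P_{o'} \perp \mathrm{ran}\,P_o$ for projections). For joint monicity, I claim that a family of mutually orthogonal projections $\{P_o\}$ is jointly monic in the sense of Section~\ref{meassec} if and only if $\sum_{o \in O'} P_o = 1_A$. One direction: if the sum is the identity, then $P_o g = P_o h$ for all $o$ forces $g = \sum_o P_o g = \sum_o P_o h = h$, so the family is jointly monic. Conversely, if $\sum_o P_o = Q \neq 1_A$, then $Q$ is itself a projection (the sum of orthogonal projections is a projection), $Q \neq 1$, and taking $g = 1_A$, $h = Q$ gives $P_o \circ 1 = P_o = P_o \circ Q$ for every $o$ (since $P_o Q = P_o$ when $\mathrm{ran}\,P_o \subseteq \mathrm{ran}\,Q$), yet $g \neq h$ — contradicting joint monicity. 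Hence joint monicity is equivalent to the completeness relation $\sum_o P_o = 1_A$.

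Having established this, the correspondence with observables is immediate: given a projective measurement $\{P_o\}_{o \in O'}$ in the abstract sense, since $O' \subseteq O \subseteq \mathbb{R}$ may be taken to consist of real numbers (or we choose a real labelling), the operator $\hat{m} = \sum_{o \in O'} o \cdot P_o$ is self-adjoint with finite spectrum $\subseteq O'$, and by the spectral theorem for operators with finite spectrum its spectral projections are exactly the $P_o$ (with $P_o = 0$ for $o$ not in the actual spectrum). Conversely every self-adjoint operator with finite discrete spectrum arises this way from its spectral decomposition, which is a family of orthogonal projections summing to $1$, hence a projective measurement in the abstract sense. Finally, I would check that this bijection respects the operational identifications: two projective measurements give the same observable (equivalence class under Section~\ref{conoprepsec}) iff they induce the same family of spectral projections, which matches the physics, and the evaluation rule $\dev_A(s, m)(o) = \Tr_A(s \circ P_o)$ recovers $\mathrm{Tr}(\rho P_o)$, the Born rule. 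The main obstacle is the converse direction of the joint-monicity characterization — pinning down that failure of $\sum_o P_o = 1$ genuinely breaks joint monicity requires producing the explicit witnesses $g = 1_A$, $h = \sum_o P_o$ and using that $P_o(\sum_{o'} P_{o'}) = P_o$; once that algebraic identity is in hand, everything else is routine spectral theory.
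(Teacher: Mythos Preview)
Your proof is correct and follows the same overall structure as the paper's: identify dagger idempotents with orthogonal projections, note that pairwise disjointness is the usual orthogonality condition, and reduce the remaining content to showing that joint monicity is equivalent to $\sum_o P_o = 1_A$. The forward direction of this equivalence is argued identically. For the converse, however, you and the paper choose different witnesses. The paper picks a nonzero vector $\psi$ annihilated by every $P_o$ and compares the map $f \colon \Complex \to A$, $1 \mapsto \psi$ with the zero map, obtaining $P_o \circ f = 0 = P_o \circ 0$ for all $o$ while $f \neq 0$. You instead compare the endomorphisms $g = 1_A$ and $h = Q := \sum_o P_o$, using the algebraic identity $P_o Q = P_o$. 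Your argument is marginally slicker in that it stays at the operator level and never singles out a vector; the paper's version makes the geometric content (an uncovered direction in $A$) more visible. Both are valid and of comparable length, so this is a cosmetic rather than a substantive difference.
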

\begin{proof}
  We think of the outcomes as labelling the eigenvalues of the
  observable; then the family $\{ P_o \}_{o \in O'}$ should correspond
  to the \emph{spectral decomposition} of the observable. Clearly,
  dagger idempotents correspond exactly to projectors  in $\Hilb$, and
  so does the notion of a pairwise disjoint family of projectors. 
  It remains to show that the joint monicity condition captures the
  fact that a pairwise disjoint family of projectors $\{ P_i \}_{i \in
    I}$ yields a \emph{resolution of the identity}, \ie 
  \[ \sum_{i \in I} P_i = 1_A . \]
  Indeed, if $\sum_{i \in I} P_i = 1_A$ and $P_i \circ g = P_i \circ h$ for all $i$, then
  \[  g = 1_A \circ g = (\sum_{i \in I} P_i) \circ g = \sum_{i \in I} P_i \circ g = \sum_{i \in I} P_i \circ h = (\sum_{i \in I} P_i) \circ h = 1_A \circ h = h . \]
  For the converse, suppose that $\sum_{i \in I} P_i \neq 1_A$. This
  implies that for some non-zero vector $\psi$, $P_i(\psi) = 0$ for all
  $i$. Then for $f \colon \Complex \to A$ given by $1 \mapsto \psi$, we
  have $P_i \circ f = P_i \circ 0$ for all $i$, so the family is not jointly
  monic. 
\end{proof}

Finally, the definition of $\dev$ matches the standard statistical
algorithm of quantum mechanics. Thus we obtain the standard
interpretations of states, transformations, (projective) measurements,
and probabilities of measurement outcomes. 

The operational category arising from $\Hilb$ is of course
probabilistic.

The same analysis holds for C*-categories through their
GNS-construction, and for subcategories of $[\CC,\Hilb]$ such as
topological quantum field theories. States and measurements in such
categories are just natural
transformations whose components are states or measurements respectively. Because the
tensor unit in such categories is the constant functor $K_I$, they
have the same scalars as $\Hilb$. Therefore the induced operational
categories are probabilistic.

\subsection{Relations}

We shall now give a general analysis of the operational representation
for locale-valued relations. This level of generality will be useful
when we go on look at non-locality in operational categories. 

We recall that a \emph{locale} \cite{johnstone1986stone} (also known
as a \emph{frame} or \emph{complete Heyting algebra}) 
is a complete lattice $\Omega$  such that the following distributive law holds:
\[ a \wedge \bigvee_{i \in I} b_i \; = \; \bigvee_{i \in I} a \wedge b_i . \]
The category $\Rel(\Omega)$ has sets as objects, while the morphisms $R : X
\rarr Y$ are $\Omega$-valued relations (or matrices) $R : X \times Y \rarr
\Omega$. We write $\Rvalue{ xRy } = \omega$ for $R(x,y)=\omega$. 
Composition is relational composition (or matrix multiplication)
evaluated in $\Omega$. If $R \colon X \rarr Y$ and $S \colon Y \rarr Z$,
then: 
\[ \Rvalue{x (S \circ R) z} \; := \; \bigvee_{y \in Y} \Rvalue{x R y}
\wedge \Rvalue{y S z} . \]
Clearly, $\Rel$ is the special case that $\Omega$ is the Boolean
semiring $\{\bot, \top\}$, where we identify $\bot$, the bottom element of the lattice, with $0$, and $\top$, the top element, with $1$.
Note that the full subcategory $\FRel(\Omega)$ of finite sets
is identical to $\Mat(\Omega)$, where we
regard $\Omega$ as a semiring with  idempotent addition and
multiplication. Indeed, in the finite case, completeness of $\Omega$ need
not be assumed, and we are simply in the case of matrices over
idempotent semirings. 

We shall take the tensor unit in $\Rel(\Omega)$ to be $I = \{ \bullet \}$.

By an $\Omega$-subset of a set $X$, we mean a function $X \to \Omega$. 
Any family $\{S_i\}$ of $\Omega$-subsets of $X$ has a `union' $\bigvee_i
S_i$ given by $x \mapsto \bigvee_i S_i(x)$, and an `intersection'
$\bigwedge_i S_i$ given by $x \mapsto \bigwedge_i S_i(x)$. In
particular, we write $\top_X$ for the $\Omega$-subset of $X$ given by $x \mapsto
\top$, and $\qes_X$ for the $\Omega$-subset of $X$ given by $x
\mapsto \bot$.
Given a set $X$, we say that a family $\{ S_{i} \}_{i \in I}$ of
$\Omega$-subsets of $X$ is a \emph{disjoint cover} of $X$ if: 
\[ S_{i} \wedge S_{j} = \qes_X \;\; (i \neq j),  \qquad \bigvee_{i \in I} S_{i} = \top_X . \]
Given a $\Omega$-subset $S$ of $X$, we define a $\Omega$-relation $\Delta_S
\colon X \to X$ by 
\[ \Rvalue{x \Delta_{S} y } = \left\{ \begin{array}{ll} S(x) & \mbox{ if }
    x=y, \\ \bot & \mbox{ if }x \neq y. \end{array}\right. \]
Note that
\begin{equation}
\label{dcovereq}
\DS \circ \DT = \qes_{X \times X} \IFF S \wedge T = \qes_X, \qquad \bigvee_{i \in I} \Delta_{S_{i}} = 1_{A} \IFF \bigvee_{i \in I} S_{i} = \top_X . 
\end{equation}

\begin{proposition}
  Projective measurements on $X$ in $\Rel(\Omega)$ consist of families of
  relations $\{ \Delta_{S_{i}} \}_{i \in I}$, where $\{ S_{i} \}_{i \in
    I}$ is a disjoint cover of $X$. 
\end{proposition}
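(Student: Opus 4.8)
The plan is to show the two inclusions. First I would take an arbitrary projective measurement $\{P_o\}_{o\in O'}$ on $X$ in $\Rel(\Omega)$ and show each $P_o$ has the form $\Delta_{S_o}$ for some $\Omega$-subset $S_o$. The key observation is that a dagger idempotent in $\Rel(\Omega)$ — \ie a relation $P\colon X\to X$ with $P=P^\dagger$ (symmetric) and $P\circ P=P$ — must be \emph{diagonal}. Indeed, if $\Rvalue{xPy}\neq\bot$ for some $x\neq y$, then by symmetry $\Rvalue{yPx}\neq\bot$, and idempotence forces $\Rvalue{xPx}\geq\Rvalue{xPy}\wedge\Rvalue{yPx}=\Rvalue{xPy}$, and similarly $\Rvalue{yPy}\geq\Rvalue{xPy}$; iterating $P\circ P\circ P=P$ one then derives $\Rvalue{xPy}=\bigvee_z\Rvalue{xPz}\wedge\Rvalue{zPz}\wedge\Rvalue{zPy}\geq \Rvalue{xPy}\wedge\Rvalue{xPx}$ and eventually a contradiction unless we can absorb the off-diagonal entries — this is the step I expect to need the most care. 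The cleanest route is probably: from $P=P\circ P$ and symmetry, $\Rvalue{xPy}=\bigvee_z \Rvalue{xPz}\wedge\Rvalue{zPy}$, so $\Rvalue{xPy}\leq \Rvalue{xPx}\vee(\text{terms through other }z)$; a short induction/fixpoint argument using joint monicity (see below) or direct manipulation in the locale shows the off-diagonal part is redundant, so without loss of generality $P=\Delta_S$ with $S(x)=\Rvalue{xPx}$. Then $P^2=P$ gives $S\wedge S=S$, which is automatic in a locale, so any $\Omega$-subset $S$ yields a projector $\Delta_S$.

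Next, given that each $P_o=\Delta_{S_o}$, I would translate the two defining conditions of a projective measurement. Pairwise disjointness $P_o\circ P_{o'}=0$ for $o\neq o'$ is, by the first half of~(\ref{dcovereq}), exactly $S_o\wedge S_{o'}=\qes_X$. For joint monicity: I claim $\{\Delta_{S_o}\}$ is jointly monic if and only if $\bigvee_o S_o=\top_X$, \ie (by the second half of~(\ref{dcovereq})) if and only if $\bigvee_o \Delta_{S_o}=1_X$. The ``if'' direction mimics the Hilbert-space argument: if $\bigvee_o\Delta_{S_o}=1_X$ and $\Delta_{S_o}\circ R=\Delta_{S_o}\circ R'$ for all $o$, then $R=1_X\circ R=(\bigvee_o\Delta_{S_o})\circ R=\bigvee_o(\Delta_{S_o}\circ R)=\bigvee_o(\Delta_{S_o}\circ R')=R'$, using that composition in $\Rel(\Omega)$ distributes over joins (which follows from the locale distributive law). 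For ``only if'', suppose $\bigvee_o S_o\neq\top_X$, so there is $x\in X$ with $\bigvee_o S_o(x)=\omega\neq\top$, indeed $\bigvee_o S_o(x)$ strictly below $\top$. Take $B=I=\{\bullet\}$ and compare $R,R'\colon I\to X$ given by $\Rvalue{\bullet R x}=\top$ and $\Rvalue{\bullet R' x}=\omega$ (and $\bot$ elsewhere); then $\Delta_{S_o}\circ R$ has $(\bullet,x)$-entry $S_o(x)\wedge\top=S_o(x)=S_o(x)\wedge\omega$, which equals the $(\bullet,x)$-entry of $\Delta_{S_o}\circ R'$, and all other entries vanish for both, so $\Delta_{S_o}\circ R=\Delta_{S_o}\circ R'$ while $R\neq R'$.

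Assembling these, a projective measurement on $X$ in $\Rel(\Omega)$ is precisely a family $\{\Delta_{S_o}\}_{o\in O'}$ with $S_o\wedge S_{o'}=\qes_X$ for $o\neq o'$ and $\bigvee_o S_o=\top_X$, \ie a disjoint cover, as claimed. I expect the genuinely delicate point to be the first paragraph — nailing down that every dagger idempotent in $\Rel(\Omega)$ is (equivalent to) a diagonal one; the disjointness and joint-monicity characterizations are then essentially bookkeeping on top of~(\ref{dcovereq}) and the distributivity of locale-valued matrix composition over joins.
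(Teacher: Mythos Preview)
Your first paragraph rests on a false claim: it is \emph{not} true that every dagger idempotent in $\Rel(\Omega)$ is diagonal. Already in ordinary $\Rel$, take $X=\{a,b\}$ and let $P$ be the full relation $X\times X$; then $P=P^{\dagger}$ and $P\circ P=P$, yet $\Rvalue{aPb}=\top$. The inequality $\Rvalue{xPx}\geq\Rvalue{xPy}$ you derive is correct, but it is perfectly compatible with nonzero off-diagonal entries, and no ``direct manipulation in the locale'' will remove them. So the strategy of first proving each $P_o$ diagonal \emph{in isolation}, and only afterwards checking disjointness and joint monicity, cannot work.

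What the paper does instead is use pairwise disjointness and joint monicity \emph{together} to force diagonality. Assuming $\Rvalue{xP_iy}=\omega>\bot$ with $x\neq y$, one builds two distinct morphisms $R,S\colon I\to X$, supported at $x$ and at $y$ respectively with value $\omega$, and checks $P_i\circ R=P_i\circ S$; then disjointness is invoked to show $P_j\circ R=P_j\circ S=\bot$ for every $j\neq i$ (because a nonzero entry of $P_j\circ R$ would force $\Rvalue{x(P_j\circ P_i)w}\geq\omega\wedge\Rvalue{xP_jw}>\bot$). The whole family thus equalizes $R$ and $S$, contradicting joint monicity. Note that your plan never mentions disjointness in the diagonality step, and it is essential.

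Your second paragraph, by contrast, is correct and actually more explicit than the paper on one point: you spell out why joint monicity of a diagonal family $\{\Delta_{S_o}\}$ is equivalent to $\bigvee_o S_o=\top_X$, whereas the paper simply appeals to~(\ref{dcovereq}) once diagonality is in hand. Your witnesses $R,R'$ for the failure of joint monicity when $\bigvee_o S_o(x)=\omega<\top$ work, since $S_o(x)\leq\omega$ gives $S_o(x)\wedge\top=S_o(x)=S_o(x)\wedge\omega$.
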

\begin{proof}
  Clearly any family of relations of this form is a projective
  measurement. 
  For the converse, suppose we have a projective measurement $\{ P_{i}
  \}_{i \in I}$ on $X$. 
  The fact that $P_{i}$ is a projector in $\Rel(\Omega)$ means that
  $\Rvalue{xP_i y} = \Rvalue{y P_i x}$ and $\Rvalue{xPz} = \bigvee_y
  \Rvalue{xPy} \wedge \Rvalue{yPz}$, 
  which implies that $\Rvalue{x P_i x} \geq \Rvalue{x P_i y}$. Suppose for a
  contradiction that $\Rvalue{x P_{i} y} = \omega > \bot$ where $x \neq y$. 
  Define $R,S \colon I \to X$ by $\Rvalue{\bullet R x } = \omega = \Rvalue{\bullet S y}$, and
  $\Rvalue{\bullet R z} = \bot = \Rvalue{\bullet S z}$ for other $z$.
  Then $\Rvalue{ \bullet (P_i \circ R) x } = \bigvee_z \Rvalue{\bullet R z} \wedge
  \Rvalue{z P_i x} = \omega \wedge \Rvalue{ x P_i x} = \omega$ since
  $\Rvalue{ x P_i x} \geq \Rvalue{ y P_i x} 
  = \omega$, and also $\Rvalue{ \bullet (P_i \circ S) x } = \omega$. Similarly
  $\Rvalue{\bullet (P_i \circ R) y} = \omega = \Rvalue{\bullet (P_i \circ S) y}$, 
  and $\Rvalue{\bullet (P_i \circ R) z } = \bot =
  \Rvalue{\bullet (P_i \circ S) z}$ for other $z$. Hence $P_{i} \circ R =  P_{i} \circ S$.
  Moreover, $P_{j} \circ R = P_{j} \circ S = \bot$ for any
  $j \neq i$, by disjointness of the family, since \eg $\bot < \Rvalue{xP_j z} \leq
  \Rvalue{ xP_j x}$ implies $P_i \circ P_j \neq \bot$. Thus  $P_{k}
  \circ R = P_{k} \circ S$ for all $k \in I$, contradicting  joint
  monicity. Hence $P_{i}$ must have the form $P_{i} = \Delta_{S_{i}}$
  for some $S_{i} \subseteq X$. The fact that the family $\{ S_{i} \}_{i
    \in I}$ is a disjoint cover of $X$ now follows
  from~(\ref{dcovereq}). 
\end{proof}

Next we analyze states in $\Rel(\Omega)$. 
Firstly, we give an explicit description of the trace. If $R \colon X
\to X$ is an $\Omega$-valued relation, 
\[ \Rvalue{\bullet \Tr_X(R) \bullet } = \bigvee_x \Rvalue{ xRx }.\]
Thus the trace can be viewed as a predicate on endo-relations, which
is satisfied to the extent that the relation has a `fixpoint', \ie a reflexive
element. 

Note that $\Omega$-valued relations $R \colon I \to X$ of unit norm
correspond to $\Omega$-subsets $S$ of $X$ satisfying
\[ \bigvee_x S(x) = \top. \]
The corresponding pure state is $P_S$, defined by $\Rvalue{xP_S y} =
S(x) \wedge S(y)$.

We say that states $s$, $t$ on $X$ are \emph{equivalent} if for all
$\Omega$-subsets $S$ of $X$:
\[ \Tr_{X}(s \circ \DS) = \Tr_{X}(t \circ \DS) . \]
\begin{proposition}
\label{relpuremixedprop}
 Every state in $\Rel(\Omega)$ is equivalent to a pure state.
\end{proposition}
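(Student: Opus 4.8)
The plan is to take an arbitrary state $s$ on $X$ in $\Rel(\Omega)$, extract from it an $\Omega$-subset $S$ of $X$, and show that the pure state $P_S$ is equivalent to $s$ in the sense just defined. First I would unpack what a state is here: $s\colon X\to X$ is an $\Omega$-relation which is positive, trace class, and satisfies $\Tr_X(s)=1=\top$. Positivity $s = g^\dagger\circ g$ gives symmetry $\Rvalue{xsy}=\Rvalue{ysx}$, and as in the previous proof one gets $\Rvalue{xsx}\geq\Rvalue{xsy}$ for all $x,y$; the normalization says $\bigvee_x\Rvalue{xsx}=\top$. The natural candidate is the $\Omega$-subset $S$ defined by $S(x):=\Rvalue{xsx}$, the `diagonal' of $s$. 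Then $\bigvee_x S(x)=\top$, so $S$ is of unit-norm type and $P_S$ (given by $\Rvalue{xP_Sy}=S(x)\wedge S(y)$) is a genuine pure state by the discussion preceding the proposition.

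Next I would compute both sides of the equivalence condition. For an arbitrary $\Omega$-subset $T$ of $X$ we have, using the explicit trace formula $\Rvalue{\bullet\Tr_X(R)\bullet}=\bigvee_x\Rvalue{xRx}$ and the composition rule,
\[
\Tr_X(s\circ\Delta_T) \;=\; \bigvee_x \bigvee_y \Rvalue{x s y}\wedge\Rvalue{y\Delta_T x}
\;=\; \bigvee_x \Rvalue{x s x}\wedge T(x)
\;=\; \bigvee_x \big(S(x)\wedge T(x)\big),
\]
since $\Delta_T$ is supported on the diagonal with value $T(x)$ there. For the pure state, the same computation gives
\[
\Tr_X(P_S\circ\Delta_T) \;=\; \bigvee_x \Rvalue{x P_S x}\wedge T(x)
\;=\; \bigvee_x \big(S(x)\wedge S(x)\big)\wedge T(x)
\;=\; \bigvee_x \big(S(x)\wedge T(x)\big),
\]
using $\Rvalue{xP_Sx}=S(x)\wedge S(x)=S(x)$. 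The two right-hand sides agree, so $s$ and $P_S$ are equivalent.

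The only point requiring a little care — and the place I expect the main (albeit mild) obstacle — is justifying the collapse $\Tr_X(s\circ\Delta_T)=\bigvee_x\Rvalue{xsx}\wedge T(x)$, i.e. that the contribution of the off-diagonal entries of $s$ is absorbed. This follows from $\Rvalue{xsx}\geq\Rvalue{xsy}$: in $\bigvee_x\bigvee_y \Rvalue{xsy}\wedge\Rvalue{y\Delta_T x}$ the inner join over $y$ forces $y=x$ because $\Rvalue{y\Delta_T x}=\bot$ whenever $y\neq x$, so no off-diagonal terms appear at all — the reduction is immediate once one writes out $\Delta_T$. I would also remark that trace-classness of $s\circ\Delta_T$ is automatic since in $\Rel(\Omega)$ every endomorphism is trace class (the supremum always exists), so no side conditions on the trace axioms need checking. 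Finally I would note that $S$ being the diagonal of $s$ is the canonical choice but not the only one; any $T$ with $\bigvee_x S(x)\wedge\text{(indicator stuff)}$... — actually it suffices to present the one construction, so I would stop here.
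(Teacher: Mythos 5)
Your proposal is correct and follows essentially the same route as the paper's own proof: take $S(x) := \Rvalue{xsx}$ (the diagonal of $s$), note that normalization gives $\bigvee_x S(x) = \top$ so that $P_S$ is a genuine pure state, and verify $\Tr_X(s\circ\Delta_T) = \bigvee_x S(x)\wedge T(x) = \Tr_X(P_S\circ\Delta_T)$ for every $\Omega$-subset $T$, the off-diagonal entries of $s$ being killed by the diagonal support of $\Delta_T$. No gaps.
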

\begin{proof}
 If $s$ is a state on $X$, then it satisfies $\top=\bigvee_x \Rvalue{
   xsx }$, and for some relation $R$,
 \[ \Rvalue{xsy} = \bigvee_z \Rvalue{xRz} \wedge \Rvalue{ yRz} . \]
 Define an $\Omega$-subset $S=\dom(s)$ of $X$ by $x \mapsto \Rvalue{ xsx}$.
 We claim that $s$ is equivalent to $P_S$. Indeed, for any
 $\Omega$-subset $T$ of $X$,
 \begin{align*}
   \Tr_X ( s \circ \Delta_T )
   & = \bigvee_x [x (s \circ \Delta_T) x] \\
   & = \bigvee_{x,y} \Rvalue{ x \Delta_T y } \wedge \Rvalue{ y s x} \\
   & = \bigvee_x T(x) \wedge \Rvalue{ xsx} \\
   & = \bigvee_x T(x) \wedge S(x) \\
   & = \bigvee_{x,y} \Rvalue{ y P_S x } \wedge \Rvalue{ x \Delta_T y} \\
   & = \Tr_X(P_S \circ \Delta_T).
   \qedhere
 \end{align*}
\end{proof}

Finally, we consider evaluation. The scalars in $\Rel(\Omega)$ can be
identified with the locale $\Omega$. Because states
correspond to $\Omega$-subsets $S$ satisfying $\bigvee_x S(x)=\top$, and
measurements to disjoint covers, we see that  equation~(\ref{disteq}) is satisfied. 
Thus we have the following result.

\begin{proposition}
  The operational category arising from $\Rel(\Omega)$ is distributional.
\end{proposition}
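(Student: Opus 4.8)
The plan is to verify the two clauses of the definition of \emph{distributional} in turn. Since the scalars of $\Rel(\Omega)$ are identified with the locale $\Omega$, with monoid multiplication $\wedge$ and unit $\top$, I would equip $\End(I) = \Omega$ with the addition $\vee$, whose unit is $\bot$. That $(\Omega, \vee, \wedge, \bot, \top)$ is a commutative semiring is then immediate: $\vee$ and $\wedge$ are commutative, associative, and unital with the stated units; $\bot \wedge a = \bot$ because $\bot$ is the least element; and $\wedge$ distributes over $\vee$ --- in fact over arbitrary joins --- by the defining law of a locale. It is worth flagging that the infinitary form of distributivity, not merely the binary case, is what the second clause will use.

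For the normalisation condition~(\ref{disteq}), fix a state $s \in \Sa_X$ and a measurement $m \in \MM_X$. By the preceding proposition, $m = \{\Delta_{S_i}\}_{i \in I}$ for a disjoint cover $\{S_i\}_{i \in I}$ of $X$, and by the definition of $\dev$ we have $\dev_X(s,m)(i) = \Tr_X(s \circ \Delta_{S_i})$ on the outcomes indexed by $I$ and $\bot$ on every other element of $O$, so the sum $\sum_{o \in O}\dev_X(s,m)(o)$ computed in the semiring $\Omega$ is just $\bigvee_{i \in I}\Tr_X(s \circ \Delta_{S_i})$. Now I would reuse the trace computation already carried out in the proof of Proposition~\ref{relpuremixedprop}, which shows that for any $\Omega$-subset $T$ of $X$ one has $\Tr_X(s \circ \Delta_T) = \bigvee_x T(x) \wedge \Rvalue{xsx}$. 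Taking $T = S_i$ and joining over $i$,
\[
\bigvee_{i \in I}\Tr_X(s \circ \Delta_{S_i}) \;=\; \bigvee_{i \in I}\bigvee_x S_i(x) \wedge \Rvalue{xsx} \;=\; \bigvee_x \Bigl(\bigvee_{i \in I} S_i(x)\Bigr) \wedge \Rvalue{xsx} \;=\; \bigvee_x \top \wedge \Rvalue{xsx} \;=\; \bigvee_x \Rvalue{xsx},
\]
where the second equality swaps and merges the joins using infinitary distributivity (legitimate since $\Rvalue{xsx}$ does not depend on $i$), and the third uses the covering condition $\bigvee_{i \in I} S_i = \top_X$. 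Finally, $s$ being a state means $\Tr_X(s) = \bigvee_x \Rvalue{xsx} = \top = 1$, so the sum equals $1$, which is exactly~(\ref{disteq}).

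Notice that disjointness of the cover is never invoked, only the covering equation $\bigvee_{i} S_i = \top_X$; this is as expected, since distributionality is purely a statement about normalisation of the outcome weights. There is no genuine obstacle in this argument: the only points that call for a little care are checking that the join manipulations are valid --- which they are precisely because $\Omega$ is a complete lattice satisfying infinitary distributivity --- and observing that the nominally infinite sum $\sum_{o \in O}$ collapses to the join over the outcome set of $m$, every remaining term being the additive unit $\bot$.
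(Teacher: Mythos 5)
Your proof is correct and takes essentially the same route as the paper's: both reduce the normalisation condition to the join calculation $\bigvee_{i}\bigvee_x S_i(x)\wedge\Rvalue{xsx} = \bigvee_x(\bigvee_i S_i(x))\wedge\Rvalue{xsx} = \bigvee_x\Rvalue{xsx} = \top$ using the covering condition and the locale's infinitary distributive law. The only differences are cosmetic improvements on your part: you make the commutative-semiring structure $(\Omega,\vee,\wedge)$ explicit where the paper leaves it implicit, and you handle an arbitrary state $s$ via the trace identity $\Tr_X(s\circ\Delta_T)=\bigvee_x T(x)\wedge\Rvalue{xsx}$ from the proof of Proposition~\ref{relpuremixedprop}, whereas the paper writes the state in the form $\Delta_S$.
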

\begin{proof}
  Let $\Delta_S$ be a state, and $m$ be a measurement given by a
  disjoint cover $\{S_o\}$ of $X$. Then
  \begin{align*}
    \sum_o \dev_A(\Delta_S,m)(o)
    & = \bigvee_o \Tr_X(\Delta_S \circ \Delta_{S_o}) \\
    & = \bigvee_{o,x,y} \Rvalue{ x \Delta_S y} \wedge \Rvalue{ y \Delta_{S_o} x} \\
    & = \bigvee_{o,x} S(x) \wedge S_o(x) \\
    & = \bigvee_x S(x) \wedge (\bigvee_o S_o(x)) \\
    & = \bigvee_x S(x) \\
    & = \top. \qedhere
  \end{align*}
\end{proof}

\paragraph{Discussion}
These results highlight two important differences between $\Rel(\Omega)$ and
$\Hilb$ as operational categories. In $\Hilb$, every projector can
appear as part of a projective measurement, while in $\Rel(\Omega)$ the
collective conditions of disjointness and joint monicity impose the
constraint that projectors have to be sub-identities $\Delta_S$. Moreover, in
$\Rel(\Omega)$ the distinction between \textit{superpositions} of pure states,
and \textit{convex combinations} to form mixed states, is lost, so
that every state is equivalent to a pure one. 
The relevance of this will become apparent when we discuss
non-locality in $\Rel(\Omega)$ in Section~\ref{nonlocsec}.

\section{Classical operational categories}
\label{stochmapsec}

The construction of operational representations on monoidal dagger
categories is directly inspired by quantum mechanics. However,
operational theories should also include classical physics --- or its
discrete operational residue. Our notion of operational representation
is indeed broad enough for this, as we shall now show. 

The basic classical setting we shall consider is the category
$\Stoch$. The objects are finite sets, and the morphisms $M \colon X
\to Y$ are the $X \times Y$-matrices valued in $[0, 1]$ which are
row-stochastic. Thus for each $x \in X$, we have a probability
distribution on $Y$. 

An alternative description of $\Stoch$ is as the Kleisli category for
the monad of discrete probability distributions; see~\cite{jacobs2010convexity}. 

The monoidal structure is defined as for $\Mat(S)$. Note that $\Stoch$
is not closed under matrix transposition. Indeed, we have the
following result. 

\begin{proposition}
  There is no dagger structure on $\Stoch$.
\end{proposition}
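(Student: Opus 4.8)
The plan is to rule out \emph{any} dagger functor on $\Stoch$ by a cardinality argument on hom-sets, so that no explicit description of a candidate $\dagger$ is needed. The key observation is that a dagger is in particular an identity-on-objects functor that acts as a bijection on each hom-set; hence for all objects $X$, $Y$ it would restrict to a bijection $\Stoch(X, Y) \cong \Stoch(Y, X)$, $f \mapsto f^{\dagger}$. So it suffices to exhibit objects $X$, $Y$ for which $\Stoch(X, Y)$ and $\Stoch(Y, X)$ have different cardinalities.

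First I would take $Y$ to be a one-element set $1$ and $X$ a two-element set. A morphism $X \to 1$ in $\Stoch$ is a row-stochastic matrix with a single column, so every entry is forced to equal $1$; thus $\Stoch(X, 1)$ is a singleton. (Equivalently, $1$ is a terminal object of $\Stoch$.) A morphism $1 \to X$, on the other hand, is a row-stochastic matrix with a single row, \ie a probability distribution on $X$; when $|X| = 2$ these range over the interval $[0,1]$, so $\Stoch(1, X)$ is infinite. Since no bijection $\Stoch(1, X) \cong \Stoch(X, 1)$ can exist, no dagger functor on $\Stoch$ exists.

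I do not expect a genuine obstacle. The only point requiring care is conceptual rather than technical: the statement asks that \emph{every} putative dagger fail, not merely the naive conjugate-transpose, and this is precisely why the argument is phrased in terms of hom-set cardinalities, which are an invariant of $\Stoch$ independent of any chosen involution. If a more structural formulation is preferred, one can instead observe that a dagger category is isomorphic to its opposite, hence has an initial object exactly when it has a terminal one; but $\Stoch$ has the terminal object $1$ while having no initial object, since $\Stoch(1, X)$ is not a singleton whenever $|X| \geq 2$.
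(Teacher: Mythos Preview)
Your proof is correct and essentially coincides with the paper's. The paper argues that a dagger makes $\CC$ self-dual, whereas the one-element set is terminal but not initial in $\Stoch$; your hom-set cardinality argument is the concrete unpacking of exactly this observation, and your closing ``more structural formulation'' is verbatim the paper's proof.
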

\begin{proof}
  Note that if a category $\CC$ has a dagger structure, it is in
  particular self-dual, \ie equivalent to $\CC^{\op}$. However, the
  one-element set is terminal but not initial in $\Stoch$, which is
  thus not self-dual.  
\end{proof}

It follows that we cannot directly apply the construction of
Section~\ref{cqmoprepsec}. One might consider using the formal
doubling construction on $\Stoch$ to obtain a dagger symmetric monoidal
category with a dagger trace ideal. But this would not yield the expected result;
for example, the dagger would not be given by
transpose of (bi-stochastic) matrices. However, it is easy to give a direct
definition of an operational representation, as follows. 

\begin{itemize}
  \item The sub-category $\Stoch_{\mathsf{t}}$ is defined by restricting to the
    functions (deterministic transformations), represented as matrices
    by their characteristic maps. Thus if $f \colon X \to Y$ is a
    function, for each $x \in X$ the corresponding probability
    distribution is $\delta_{f(x)}$. 

  \item A state on $X$ is a morphism $I \to X$ in $\Stoch$, or
    equivalently a probability distribution on $X$. This is the
    classical notion of mixed state. 
    The functorial action of states is described as follows. Given $f
    \colon X \to Y$, we define 
    \[ f_*(s)(y) \; = \; \sum_{f(x) = y}  s(x) . \]

  \item A measurement on $X$ is a function $m \colon X \to O$ with
    finite image $O' \subseteq O$. This is just a discrete random
    variable. The functorial action on $f \colon X \to Y$ is just 
    \[ m \mapsto m \circ f . \]

  \item The evaluation is defined by:
    \[ \dev_X(s, m)(o) \; = \; \sum_{m(x) = o} s(x) . \]
\end{itemize}

The following result is easily verified.

\begin{proposition}
  The above data specifies a probabilistic operational representation
  of $\Stoch$.
  \qed
\end{proposition}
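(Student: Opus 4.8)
The plan is to unpack the definition of operational representation (here with weights $\WW = \CD$) clause by clause and check each one, using throughout the observation that the evaluation is nothing but pushforward along the measurement: if we regard a measurement $m \colon X \to O$ as a function onto its finite image, then $\dev_X(s,m) = m_*(s)$. This reduces the two substantive conditions on $\dev$ to standard facts about pushforward of distributions.

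First I would verify that $\Stoch_{\mathsf{t}}$ is a symmetric monoidal subcategory of $\Stoch$: functions are closed under composition and under the cartesian-product tensor, they contain all identities, and the associator, unitors and symmetry of $\Stoch$ are bijections, hence functions. Next, $\Sa$ is a symmetric monoidal embedding $\Stoch_{\mathsf{t}} \to \Set$: on objects it sends $X$ to the set of distributions on $X$ (so $\Sa_I$ is a singleton, matching the monoidal unit of $\Set$), on morphisms it is pushforward $f_*$, functoriality is the standard functoriality of pushforward, and monoidality is witnessed by $\embS_{X,Y}(s,t) = s\otimes t$, the product distribution, for which naturality and the coherence conditions follow from $(f\times g)_*(s\otimes t) = f_*(s)\otimes g_*(t)$; faithfulness follows by evaluating on point masses, since $f_*(\delta_x) = \delta_{f(x)}$, and injectivity on objects is clear. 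Dually, $\MM$ is a contravariant symmetric monoidal embedding $\Stoch_{\mathsf{t}}^{\op} \to \Set$, acting on $f$ by precomposition $f^*(m) = m\circ f$ (which again has finite image), with $(g\circ f)^* = f^*\circ g^*$ and monoidal structure sending $(m,m')$ to the function $(x,y)\mapsto(m(x),m'(y))$, its pair of outcomes relabelled as an element of the fixed set $O$; its image lies in the finite set $m(X)\times m'(Y)$, and naturality, coherence and faithfulness (via injective measurements into $O$) are routine.

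Then I would check the two conditions on $\dev$. Dinaturality is exactly the Chu condition $\dev_Y(f_*s,m) = \dev_X(s,f^*m)$ for $f\colon X \to Y$: both sides equal $\sum_{m(f(x))=o}s(x)$, which in pushforward form is just $m_*(f_*s) = (m\circ f)_*s$. Monoidality of $\dev$ is the identity
\[ \dev_{X\times Y}(s\otimes t, m\otimes m')(o,o') = \dev_X(s,m)(o)\cdot\dev_Y(t,m')(o'), \]
which holds because $\sum_{(m\otimes m')(x,y)=(o,o')}(s\otimes t)(x,y)$ factors as the product $\big(\sum_{m(x)=o}s(x)\big)\big(\sum_{m'(y)=o'}t(y)\big)$; the unit condition is trivial since $\End(I)$ is a one-element monoid. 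Finally, for the qualifier \emph{probabilistic}: $\dev_X(s,m) = m_*(s)$ is a probability distribution on $O$ supported on the finite set $m(X)$, hence genuinely an element of $\CD$; it satisfies $\sum_{o}\dev_X(s,m)(o) = \sum_x s(x) = 1$, which is equation~(\ref{disteq}); and all its values lie in $[0,1]\subseteq\Real_{\ge 0}$, so the image of $\dev$ embeds into the non-negative reals.

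I do not expect any real obstacle: the proposition is a consistency check that the abstract framework subsumes the classical theory $\Stoch$ as intended. The only mildly delicate points are bookkeeping --- the relabelling of the combined outcomes of $m\otimes m'$ so that they again form a subset of the fixed infinite outcome set $O$, needed both for the monoidal structure on $\MM$ to be well defined and for the coherence isomorphisms to line up, and fixing a distinguished element of $\MM_I$ to serve as the monoidal unit for $\MM$.
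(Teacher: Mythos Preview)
Your verification is correct. The paper itself gives no proof --- the proposition is stated with a bare \qed after ``The following result is easily verified'' --- so there is nothing to compare against beyond confirming that your clause-by-clause check goes through, which it does. The pushforward reformulation $\dev_X(s,m) = m_*(s)$ is a clean way to see dinaturality and monoidality at once, and your remarks about outcome relabelling for $\MM$ correctly identify the only place where any care is needed.
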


Various generalizations of this construction are possible:
\begin{itemize}
  \item We can generalize to `distributions' over an arbitrary
    commutative semiring, as in~\cite{jacobs2010convexity}. 
    This will still yield a distributional operational representation.

  \item We can generalize to probability measures over general measure
    spaces. This amounts to using the Kleisli category of the Giry
    monad~\cite{giry1982categorical}. 
\end{itemize}

\section{Non-locality in operational categories}
\label{locsec}

Having set up a general framework for operational categories, we shall
now investigate an important foundational notion in this general
setting; namely \emph{non-locality}. 

Throughout this section, we fix a distributional operational
representation $(\CC, \Ct, \Sa, \MM, \dev)$ on a monoidal category
$\CC$. 

\subsection{Empirical models}

We shall begin by showing how probability models of the form commonly
studied in quantum information and quantum foundations can be
interpreted in the corresponding operational category. In these
models, there are $n$ agents or sites, each of which has the choice of
one of several measurement settings; and each measurement has a number
of distinct outcomes.  For each choice of a measurement setting by
each of the agents, we have a probability distribution on the joint
outcomes of the measurements. 

We shall associate objects $A_1, \ldots , A_n$ with the $n$ sites. We
define $A := A_1 \otimes \cdots \otimes A_n$. We fix a state $s \in
\Sa_A$. For each combination of measurements $(m_1, \ldots , m_n)$,
where $m_i \in \MM_{A_i}$ for $i = 1, \ldots , n$, we obtain the
measurement $m := m_1 \otimes \cdots \otimes m_n$ by inclusion of pure
tensors. Now the probability of obtaining a joint outcome $o := (o_1,
\ldots , o_n)$ for $m$ is given by  
\[ p(o|m) \; := \; \dev_A(s,m)(o) . \]

We can regard these models as observational `windows' on the
operational theory. They represent the directly  accessible
information predicted by the theory, and provide the empirical
yardstick by which it is judged. 

\subsection{Non-locality}

We now define what it means for an empirical model of the kind
described in the previous sub-section to exhibit non-locality. 
We shall follow the traditional route of using hidden variables
explicitly, although we could equivalently, and perhaps more
elegantly, formulate non-locality in terms of the (non-)existence of a
joint distribution \cite{fine1982joint,abramsky2011unified}. 

We are assuming a fixed distributional model, with a semiring of weights $\WW$.
A \emph{$\WW$-distribution} on a set $X$ is a function $d : X \rarr \WW$ of finite support, such that
\[ \sum_{x \in X} d(x) \; = \; 1 . \]
A hidden-variable model for an empirical model is defined using a set
$\Lambda$ of hidden variables, with a fixed distribution
$d$.\footnote{The assumption of a fixed distribution $d$ is
  technically the condition of `$\lambda$-independence'
  \cite{dickson1999quantum}.} 
For each $\lambda \in \Lambda$, the model specifies a 
distribution $\ql(o|m)$ on outcomes $o$ for each choice of
measurements $m$. The required  condition for the hidden variable
model to realize the empirical model $p$ is that, for all $m$ and $o$: 
\[ p(o|m) \; = \; \sum_{\lambda \in \Lambda} \ql(o|m) \cdot d(\lambda) . \]
That is, we recover the empirical probabilities by averaging over the
hidden variables. 

We say that the hidden-variable model is \emph{local} if, for all
$\lambda \in \La$, $m = (m_1, \ldots ,m_n)$, and $o := (o_1, \ldots ,
o_n)$: 
\[ \ql(o|m) \; = \; \prod_{i=1}^n \ql(o_i | m_i) . \]
Here $\ql(o_i | m_i)$ is the marginal:
\[ \ql(o_i | m_i) \; = \; \sum_{o'_i = o_i, m'_i = m_i} \ql(o' | m') . \]
We say that the empirical model $p$ is \emph{local} if it is realized by some
local hidden-variable model; and \emph{non-local} otherwise. 

Note that the definition of non-locality makes sense for any
distributional operational category. Thus we can lift these ideas to
the general level of operational categories. We say that an
operational category exhibits non-locality if it gives rise to a
non-local empirical model. Ultimately, we have a criterion for
ascribing non-locality to monoidal process categories themselves,
relative to a given distributional operational representation. 

\section{Examples of non-locality}

We shall now investigate non-locality in a number of examples.

\subsection{Hilbert spaces}

As expected, the operational category arising from $\Hilb$, which is
essentially the finite-dimensional part of standard quantum mechanics,
does exhibit non-locality. 

As a standard example --- essentially the one used by Bell in his
original proof of Bell's theorem --- consider the following table.
\begin{center}
\begin{tabular}{l|ccccc}
& $(0, 0)$ & $(1, 0)$ & $(0, 1)$ & $(1, 1)$  &  \\ \hline
$(a, b)$ & $1/2$ & $0$ & $0$ & $1/2$ & \\
$(a, b')$ & $3/8$ & $1/8$ & $1/8$ & $3/8$ & \\
$(a', b)$ & $3/8$ & $1/8$ & $1/8$ & $3/8$ &  \\
$(a', b')$ & $1/8$ & $3/8$ & $3/8$ & $1/8$ & 
\end{tabular}
\end{center}
It lists the probabilities that one of two outcomes (0 or 1) occurs
when simultaneously measured with one of two measurements at two sites
($a$ or $a'$ at the first site, and $b$ or $b'$ at the second). 
This table can be realized in quantum mechanics, \eg by a Bell state,
written in the $Z$ basis as 
\[ \frac{\mid \uparrow \uparrow  \rangle \; + \; \mid \downarrow \downarrow \rangle}{\sqrt{2}} , \]
subjected to spin measurements in the   $XY$-plane of the Bloch
sphere, at a relative angle of $\pi/3$. 

A standard argument (see \eg
\cite{bell1964einstein,abramsky2011unified}) shows that this table
cannot be realized by a local hidden-variable model. 

The same reasoning applies to C*-categories and subcategories of
$[\CC,\Hilb]$, taking the hidden variables componentwise. The constant
functor valued e.g.~at the model described above then still shows that such
operational categories are non-local.

\subsection{Relations}
\label{nonlocsec}

Suppose we are given an empirical model in the distributional operational
category obtained from $\Rel(\Omega)$. 
The types are sets $X_1, \ldots , X_n$,
there is a state $s = \Delta_S$ for a $\Omega$-subset $S$ of $X := \prod_i
X_i$ satisfying $\bigvee_x S(x)=\top$, and measurements $m_i  = \{ \Delta_{S^i_{o}} \}_{o \in O'}$,
where $\{ S^i_o \}$ is a disjoint cover of $X_i$. For each combination
of measurements $m$ and outcomes $o$, we have: 
\[ p(o|m) = \left\{ \begin{array}{ll}
\bigvee_x S(x) \wedge S_o^i(x) & \mbox{ if } o \in O', \\
0 & \mbox{ otherwise.}
\end{array}
\right.
\]

We shall now construct a local hidden-variable model which realizes
this empirical model, using the elements of $X$ as the hidden
variables. We define the distribution $\ds$ on $X$ as $x \mapsto
S(x)$. Note that we are working over $\Omega$ (the locale of scalars in
$\Rel(\Omega)$), so this is a well-defined distribution, which sums to $1$.

We define $\px(o|m) \equiv \bigwedge_i S^i_{o_i}(x_i)$, so this
hidden-variable model is local by construction.

We must verify that this model agrees with the empirical model. This
comes down to the following calculation for $o \in O'$:
\[ 
  p(o|m)
  = \bigvee_x S(x) \wedge S^i_o(x)
  = \bigvee_x S(x) \wedge \bigwedge_i S^i_{o_i}(x)
  = \bigvee_x p^x(o|m) \wedge \ds(x).
\]

We conclude from this that $\Rel(\Omega)$, despite being a `quantum-like'
monoidal dagger-category, does \textit{not} admit non-local
behaviour. 
This stands in interesting counter-point to the fact that, as shown
extensively in \cite{abramsky2010relational}, relational models can be
used to give `logical' proofs of non-locality and contextuality, in
the style of `Bell's theorem without inequalities'
\cite{greenberger1990bell}. 
The key point is that these logical proofs are based on
showing the non-existence of global sections compatible with a given
empirical model; while here we are looking at empirical models
generated by \textit{states} in $\Rel(\Omega)$, which are exactly sets of
global elements. 

The key feature of quantum mechanics, by contrast, is that quantum
states under suitable measurements \emph{are} able to realize families of
probability distributions which have no global sections. 

Bearing in mind that finite-dimensional quantum mechanics corresponds
to the operational category arising from $\Mat(\Complex)$, while
$\FRel(\Omega)$ is $\Mat(\Omega)$, this shows that \emph{idempotence of the
  scalars implies that only local behaviour can be realized}; thus
non-locality can only arise in non-idempotent situations.

\subsection{Classical stochastic maps}

We now consider the case of classical stochastic maps, as discussed in
Section~\ref{stochmapsec}. 
This is in fact quite similar to the case for $\Rel$.
Given an empirical model realized by sets $X_1, \ldots , X_n$, a state
$s$ which is a probability distribution on $X := \prod_i X_i$, and
measurements $m_i \colon X_i \to O$, we again take the hidden
variables to be the elements of $X$. We can write $s$ as a convex
combination 
\[ s = \sum_{x \in X} \mu_x \dx . \]
Note that $\mu$ is a probability distribution on $X$. 
We can define $\px(o | m) := \ddo(m(x))$. 
Clearly $\px(o|m) = \prod_i \px(o_i|m_i)$, so this hidden-variable
model is local. 

It is straightforward to verify that the probabilities $p(o|m)$ are
recovered by averaging over the deterministic hidden variables. 

Thus we conclude, as expected, that $\Stoch$ does not exhibit
non-locality. 

In fact, we can say more than this. We can calibrate the expressiveness of an operational theory in terms of which empirical models it realizes. We shall now show that $\Stoch$ realizes \emph{exactly} those models which have local hidden-variable realizations.

To see this, suppose we are given sets of measurements $M_1, \ldots , M_n$. We define $M := \bigsqcup_i M_i$, the disjoint union of these sets of measurements, and $X := O^M$.
Thus elements of $X$ simultaneously assign outcomes to all measurements.
For each $m = (m_1,\ldots,m_n) \in \prod_i M_i$, we define a map $\hat{m} \colon X \rarr O$ by
\[ \hat{m} \colon x \mapsto (x(m_1), \ldots , x(m_n)) . \]
For each $m$, we get the probability distribution on outcomes given by
\[ d_m \colon o \mapsto \sum_{\hat{m}(x) = o} s(x) . \]
This is the empirical model realized by the state $x$, viewed as a probability distribution on the hidden variables $X$; and as shown e.g.~in \cite{abramsky2011unified}, all local models are of this form.

\subsection{Signed Stochastic Maps}

We shall now consider a variant of $\Stoch$ which has much greater expressive power in terms of the empirical models it realizes.
This is the category $\SStoch$ of \emph{signed} stochastic maps; real matrices such that each row sums to 1. Thus for each input, there is a `signed probability measure' on outputs, which may include `negative probabilities' \cite{wigner1932quantum,Dirac42,moyal1949quantum,feynman1987negative}.
An operational representation can be defined for $\SStoch$ in the same fashion as for $\Stoch$; it is still distributional.

The following result can be extracted from \cite[Theorem 5.9]{abramsky2011unified}, using the same encoding of empirical models which we employed in the previous sub-section.
The reader should refer to \cite[Theorem 5.9]{abramsky2011unified} for the details, which are non-triviai.

\begin{proposition}
The class of empirical models which are realized by the operational category obtained from $\SStoch$ are exactly the no-signalling models; thus they properly contain the  quantum models.
\end{proposition}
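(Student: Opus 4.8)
The plan is to establish the two inclusions separately: every empirical model realized by $\SStoch$ is no-signalling, and conversely every no-signalling model arises from a signed stochastic map. The first inclusion is the easy direction. Given a state $s$ (a signed probability distribution on $X := \prod_i X_i$) and measurements $m_i \colon X_i \to O$ forming $m = m_1 \otimes \cdots \otimes m_n$, the empirical probabilities are $p(o|m) = \sum_{m(x) = o} s(x)$, where $m(x) = (m_1(x_1), \ldots, m_n(x_n))$. To check no-signalling, I would fix a subset of sites, marginalize the outcomes at the complementary sites, and observe that because each $m_j$ depends only on $x_j$, the marginal distribution at the chosen sites depends only on the measurement choices at those sites --- the sum over the discarded outcomes collapses the dependence on the discarded $m_j$'s, since $\sum_{o_j} [\text{contribution}]$ ranges over all of $X_j$ regardless of which $m_j$ was chosen. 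This is a direct computation, essentially identical to the $\Stoch$ case treated in the preceding subsection, and the signs cause no difficulty since the argument is purely combinatorial and linear.

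For the converse --- the substantive direction --- I would invoke the encoding already used in the $\Stoch$ subsection: given sets of measurement settings $M_1, \ldots, M_n$, set $X_i := O^{M_i}$ and let $\hat{m} \colon \prod_i X_i \to O^n$ send $x \mapsto (x_1(m_1), \ldots, x_n(m_n))$. Then a ``state'' is any signed distribution $s$ on $X := \prod_i X_i = O^{M}$ where $M = \bigsqcup_i M_i$, and it realizes the empirical model $d_m(o) = \sum_{\hat{m}(x) = o} s(x)$. The claim to be imported from \cite[Theorem 5.9]{abramsky2011unified} is precisely that an empirical model is no-signalling if and only if there exists such a signed distribution $s$ realizing it. In the language of that paper, a no-signalling empirical model corresponds to a ``signed'' global section, and the existence of such a section follows from a dimension/linear-algebra argument: the no-signalling conditions cut out an affine subspace, and the ``signed empirical models'' (equivalently, elements of the image of the marginalization map from signed distributions on $O^M$) span exactly that subspace. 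Thus I would state the reduction carefully, cite Theorem~5.9 for the nontrivial existence claim, and note that the construction there produces exactly a morphism $I \to X$ in $\SStoch$.

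Finally, for the last clause --- that the no-signalling models \emph{properly} contain the quantum models --- I would point to the standard fact that the Popescu--Rohrlich box (the PR box) is no-signalling but violates the Tsirelson bound and hence is not quantum-realizable; alternatively one can simply cite that this strict inclusion is classical and well known, and that the Hilbert-space subsection already exhibits quantum models that are non-local, so quantum $\subsetneq$ no-signalling follows from quantum $\supsetneq$ local together with the existence of a no-signalling non-quantum model.

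The main obstacle is the converse inclusion, and specifically pinning down the exact correspondence between the combinatorial encoding of empirical models used here and the sheaf-theoretic formulation of \cite[Theorem 5.9]{abramsky2011unified}; once that dictionary is fixed, the result is a citation, but verifying that ``no-signalling empirical model'' in the present sense matches ``model admitting a signed global section'' there --- including that the finite-support condition on $\SStoch$ states is automatically met since $O^M$ is finite --- requires care. The remaining pieces (the no-signalling check for $\SStoch$, and the strictness over quantum) are routine.
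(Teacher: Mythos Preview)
Your proposal is correct and follows essentially the same route as the paper: the paper does not give a self-contained proof but simply points to the encoding from the preceding $\Stoch$ subsection and defers the substantive content to \cite[Theorem~5.9]{abramsky2011unified}, with the PR box exhibited afterward to witness strict containment over the quantum models. Your write-up is in fact more detailed than the paper's --- you spell out the easy no-signalling direction and the dictionary with the sheaf-theoretic formulation --- but the strategy is identical.
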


This says that the operational category obtained from $\SStoch$ is \emph{more} expressive, in terms of the empirical models it realizes, than the canonical operational  category derived from $\Hilb$, which corresponds to quantum mechanics.

\paragraph{Example}
We consider the bipartite system with two measurements at each site, each with outcomes $\{ 0, 1 \}$. Thus the disjoint union $M$ of the two sets of measurements has four elements, and $X = \{ 0, 1 \}^M$ has 16 elements.
Now consider the following state:
\[ x \; := \; [1/2, 0, 0, 0, - 1/2, 0, 1/2, 0, - 1/2, 1/2, 0, 0, 1/2, 0, 0, 0 ] . \]
The distributions it generates for the various measurement combinations can be listed in the following table.
\begin{center}
\begin{tabular}{l|ccccc}
& $(0, 0)$ & $(1, 0)$ & $(0, 1)$ & $(1, 1)$  &  \\ \hline
$(a, b)$ & $1/2$ & $0$ & $0$ & $1/2$ & \\
$(a', b)$ & $1/2$ & $0$ & $0$ & $1/2$ & \\
$(a, b')$ & $1/2$ & $0$ & $0$ & $1/2$ & \\
$(a', b')$ & $0$ & $1/2$ & $1/2$ & $0$ & 
\end{tabular}
\end{center}
This can be recognized as the \emph{Popescu-Rohrlich box} \cite{popescu1994quantum}, which achieves super-quantum correlations.

The state $x$ can be obtained from the PR-box specification by solving a system of linear equations; see \cite{abramsky2011unified} for details.

\section{Final remarks}

This paper makes a first precise connection between monoidal
categories, and the categorical quantum mechanics framework, on the
one hand, and operational theories on the other. 
Clearly, this can be taken much further. We note a number of
directions which it would be interesting to pursue. 

\begin{itemize}
  \item We have used our framework of operational categories to study
    non-locality in a general setting. In particular, we have a clear
    definition of whether a model of categorical quantum mechanics
    exhibits non-locality or not, as explained at the end of Section~\ref{locsec}. As we saw, while Hilbert-space
    quantum mechanics does, the category of sets and relations, which
    forms a very useful `foil' model for quantum mechanics in many
    respects \cite{spekkens2007evidence,coecke2011phase}, does not. An important further direction is to apply a
    similar analysis to \textit{contextuality}, which can be seen as a
    broader phenomenon, of which non-locality is a special case. In
    \cite{abramsky2011unified}, a general setting is developed
    allowing a unified treatment of contextuality and non-locality. We
    would like to extend the present account to this setting, in which
    \textit{compatibility} of measurements is explicitly represented,
    leading to a natural sheaf-theoretic structure. 

  \item Such a development would also lead to a more satisfactory
    treatment of outcomes, in place of the somewhat clumsy device used
    in the present paper. 


  \item It would also be interesting to interpret some of the general
    results which have been proved for operational theories, relating 
    \eg to no-broadcasting \cite{barnum2007generalized}, teleportation
    \cite{barnum2008teleportation}, and information causality
    \cite{barnum2010entropy}, in our categorical framework, and
    ultimately to obtain such results for classes of monoidal process
    categories. 

  \item We would also like to examine the issue of axiomatization or
    `reconstruction' of quantum mechanics from the categorical point
    of view.

  \item There are various constructions for turning a monoidal category of `pure' states
    into one of `mixed'
    states~\cite{selinger:completelypositive,coeckeheunen:completelypositive}. 
    It would be interesting to relate these constructions to  our canonical operational categories. Similarly, there is a
    category embodying Spekkens' toy
    theory~\cite{spekkens2007evidence,coeckeedwards:spek}. It would be of interest to study the associated operational category.
\end{itemize}

Regarding related work, we note that in~\cite{barnum2010symmetry}, the
structure of the concrete category of convex operational theories is investigated. 

\paragraph{Acknowledgements}
Financial support from EPSRC 
Senior Research Fellowship EP/E052819/1 and the U.S. Office of Naval
Research Grant Number  N000141010357 is gratefully acknowledged. 
We thank Shane Mansfield for a number of useful comments, which in particular led to an improved formulation of Proposition~\ref{relpuremixedprop}.

\appendix
\newcommand{\cat}[1]{\ensuremath{\mathbf{#1}}}
\renewcommand{\id}[1][]{\ensuremath{1_{#1}}}
\newcommand{\inprod}[2]{\ensuremath{\langle #1 \mid #2 \rangle}}
\newcommand{\sxto}[1]{\xrightarrow{\smash{#1}}}
\theoremstyle{definition}
\newtheorem{example}[equation]{Example}
\newcommand{\eqcomment}[1]{\ensuremath{\tag*{\mbox{\scriptsize{#1}}}}}

\section{First notions from category theory}
 
We shall review some basic notions from category theory. For more
detailed background, see \cite{awodey2010category}. 

A \emph{category} $\CC$ has a collection of objects $A, B, C, \ldots$, and
arrows $f, g, h, \ldots$. 
Each arrow has  specified \emph{domain} and \emph{codomain} objects:
notation is $f \colon A \to B$ for an arrow $f$ with domain $A$ and
codomain $B$. The collection of all arrows with domain $A$ and
codomain $B$ is denoted as $\CC(A,B)$.
Given arrows $f \colon A \to B$ and $g \colon B \to C$, we can
form the \emph{composition} $g \circ f \colon A \to C$. Composition is
associative, and there are identity arrows $1_{A} \colon A \to A$ for
each object $A$, with $f \circ 1_{A} = f$, $1_{A} \circ g = g$, for
every $f \colon A \to B$ and $g \colon C \to A$. An arrow $f \colon A
\to B$ is called an \emph{iso(morphism)} when $f \circ f^{-1}=1_B$ and
$f^{-1} \circ f = 1_A$ for some arrow $f^{-1} \colon B \to A$.
An arrow $f \colon A \to B$ is \emph{split monic} when $g \circ f =
1_A$ for some $g \colon B \to A$, and it is \emph{split epic} when $f
\circ g = 1_A$ for some $g \colon B \to A$; by abuse of notation, we
will write $g=f^{-1}$ in both cases.

If $\CC$ is a category, we write $\CC^{\op}$ for the opposite category, with the same objects as $\CC$, and arrows $A \rarr B$ corresponding to arrows $B \rarr A$ in $\CC$.

If $\CC$ and $\DD$ are categories, a \emph{functor} $F \colon \CC \to
\DD$ assigns an object $FA$ of $\DD$ to each object $A$ of $\CC$; and
an arrow $Ff \colon FA \to FB$ of $\DD$ to every arrow $f \colon A \to B$ of
$\CC$. These assignments must preserve composition and identities:
$F(g \circ f) = F(g) \circ F(f)$, and $F(1_{A}) = 1_{FA}$. 

Given functors $F, G \colon \CC \to \DD$, a \emph{natural transformation}
$t \colon F \natarrow G$ is a family of arrows $\{ t_A \colon FA \to GA \}$
indexed by the objects of $\CC$, such that, for every $f \colon A \to B$
in $\CC$, the following naturality diagram commutes:
\[\xymatrix{
  FA \ar^-{t_A}[r] \ar_-{Ff}[d] & GA \ar^-{Gf}[d] \\
  FB \ar_-{t_B}[r] & GB
}\]
A \emph{natural isomorphism} is a natural transformation whose
components are isomorphisms. An \emph{equivalence of categories} is a
pair of functors $F \colon \CC \to \DD$ and $G \colon \DD \to \CC$
such that there are natural isomorphisms $F \circ G \cong 1_{\DD}$ and
$G \circ F \cong 1_{\CC}$.

A \emph{symmetric monoidal category} is a structure $(\CC , \otimes ,
I, \alpha, \lambda, \rho, \sigma)$ where:
\begin{itemize}
\item $\CC$ is a category;
\item $\otimes \colon \CC \times \CC \to\CC$ is a functor (\emph{tensor});
\item $I$ is a distinguished object of $\CC$ (\emph{unit});
\item $\alpha$, $\lambda$, $\rho$, $\sigma$ are natural isomorphisms (\emph{structural isos}) with components
\[ \alpha_{A, B, C} \colon A \otimes (B \otimes C) \to (A \otimes B) \otimes C \]
\[ \lambda_A \colon I \otimes A \to A \qquad \quad \rho_A \colon A \otimes I \to A \]
\[ \sigma_{A,B} \colon A \otimes B \to B \otimes A \]
such that certain coherence diagrams commute.
\end{itemize}
Products are a classical example of symmetric monoidal structure; the category
is then called \emph{Cartesian}. The symmetric monoidal structure can
also support entanglement; the category is then called \emph{compact}~\cite{abramsky2008categorical}.

Let $\CC$ and $\DD$ be symmetric monoidal categories.
A symmetric monoidal functor 
\[ (F, e, m) \colon \CC \to \DD \]
comprises
\begin{itemize}
\item a functor $F \colon \CC \to \DD$,
\item an arrow $e \colon I_{\DD}  \to FI_{\CC}$,
\item a natural transformation $m_{A, B} \colon FA \otimes FB \to F(A \otimes B)$,
\end{itemize}
subject to coherence conditions with the structural isomorphisms. The
symmetric monoidal functor is called \emph{strong} when $m$ is a
natural isomorphism.

Let $(F, e, m), (G, e', m') \colon \CC \to \DD$ be symmetric monoidal functors.
A \emph{monoidal natural transformation} between them is a natural transformation
$t \colon F \natarrow G$ such that the following diagrams commute.
\[
\xymatrix{I \ar^-{e}[r] \ar_-{e'}[dr] & FI \ar^-{t_I}[d] \\ & GI}
\qquad \qquad
\xymatrix{FA \otimes FB \ar^-{m_{A,B}}[r] \ar_-{t_A \times t_B}[d]
   & F(A \otimes B) \ar^-{t_{A \otimes B}}[d] \\
   GA \otimes GB \ar_-{m'_{A,B}}[r] & G(A \otimes B)
}
\]

\section{Trace ideals}

This appendix further studies the notion of trace ideal, introduced in
Section~\ref{subsec:additionalstructure}. It presents several
technical results that are mathematically interesting, but would break up
the flow of the main text. For example, we characterize when trace ideals
exist, and to what extent they are unique. Also, we show that we really 
need to restrict to ideals to consider traces: the category of Hilbert
spaces does not support a trace on all morphisms. As a conceptually satisfying
corollary, we derive that dual objects in the category of Hilbert
spaces are necessarily finite-dimensional. Finally, we prove
in some detail that the category of 
Hilbert spaces indeed has a
trace ideal; this was claimed in Section~\ref{subsec:additionalstructure}, 
but the details are quite subtle. 

\subsection{Existence}

The question whether a category allows a trace ideal at all can be
answered as follows.

  A subcategory $\cat{D}$ of $\cat{C}$ is called \emph{tracial} when
  endomorphisms in $\cat{C}$ factoring through $\cat{D}$ can only do so in a
  way unique up to isomorphism. More precisely: if $f_1 \colon X \to Y$, $f_2 \colon Y \to
  X$,  $f'_1 \colon X \to Y'$, $f'_2 \colon Y' \to X$ are morphisms of
  $\cat{C}$, and $Y$ and $Y'$ are objects of $\cat{D}$, and $f_2 \circ
  f_1 = f'_2 \circ f'_1$, then there is a morphism $i \colon Y \to Y'$
  in $\cat{D}$ that is either split monic or split epic, such that
  $f'_1=i \circ f_1$ and $f'_2 = f_2 \circ i^{-1}$. 
  \[\xymatrix@R-2ex{
    & Y \ar^-{f_2}[dr] \ar@{-->}@<.5ex>^-{i}[dd] \\
    X \ar^-{f_1}[ur] \ar_-{f'_1}[dr] && X \\
    & Y' \ar_-{f'_2}[ur] \ar@{-->}@<.5ex>^-{i^{-1}}[uu]
  }\]
  The category $\cat{C}$ is called \emph{traceable} when the full
  subcategory consisting of the monoidal unit $I$ is tracial.
Notice that traceability generalizes the fact, holding in any monoidal
category, that the scalars are commutative. 

\begin{proposition}\label{tracialsubcategory}
  Any dagger monoidal tracial subcategory $\cat{D}$ of $\cat{C}$ with
  a trace ideal induces a trace ideal
  \begin{align*}
    \Ideal(X) & = \{ f \in \cat{C}(X,X) \mid f=f_2 \circ
    f_1 \,\mbox{with } f_1 \colon X \to Y, f_2 \colon Y \to X
    \,\text{and } Y \,\mbox{in }\cat{D}, 
    f_1 \circ f_2 \in \Ideal_{\cat{D}} (Y) \} \\
    \Tr(f) & = \Tr_{\cat{D}}(f_1 \circ f_2)
  \end{align*}
  on $\cat{C}$.
\end{proposition}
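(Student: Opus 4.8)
The plan is to verify directly that the proposed pair $(\Ideal, \Tr)$ satisfies the axioms of a (dagger) trace ideal on $\cat{C}$, using the tracial hypothesis on $\cat{D}$ precisely at the point where well-definedness of $\Tr$ is at stake. First I would check well-definedness: if $f = f_2 \circ f_1 = f'_2 \circ f'_1$ are two factorizations through objects $Y, Y'$ of $\cat{D}$ with $f_1 \circ f_2 \in \Ideal_{\cat{D}}(Y)$ and $f'_1 \circ f'_2 \in \Ideal_{\cat{D}}(Y')$, then by traciality there is $i \colon Y \to Y'$ in $\cat{D}$, split monic or split epic, with $f'_1 = i \circ f_1$ and $f'_2 = f_2 \circ i^{-1}$. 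Hence $f'_1 \circ f'_2 = i \circ (f_1 \circ f_2) \circ i^{-1}$, which lies in $\Ideal_{\cat{D}}(Y')$ by the ideal closure conditions (so the membership condition is independent of the factorization chosen), and $\Tr_{\cat{D}}(f'_1 \circ f'_2) = \Tr_{\cat{D}}(i \circ (f_1 \circ f_2) \circ i^{-1}) = \Tr_{\cat{D}}((f_1 \circ f_2) \circ i^{-1} \circ i) = \Tr_{\cat{D}}(f_1 \circ f_2)$ using the cyclicity axiom for $\Tr_{\cat{D}}$ together with $i^{-1} \circ i = 1_Y$ (in the split monic case) or $i \circ i^{-1} = 1_{Y'}$ (in the split epic case — here one applies cyclicity in the other order first). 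I expect this to be the main obstacle, in that it is the only step where the geometry of the hypothesis is used, and one must be careful that the split monic and split epic cases are both handled, using cyclicity in the appropriate order.

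Next I would check the closure conditions for $\Ideal$. For the two-sided ideal condition, given $f \in \Ideal(X)$ with witnessing factorization $f = f_2 \circ f_1$ through $Y$ in $\cat{D}$, and arbitrary $g \colon X \to X'$, $h \colon X' \to X$ in $\cat{C}$, one has $g \circ f \circ h = (g \circ f_2) \circ (f_1 \circ h)$, a factorization through the same object $Y$ of $\cat{D}$, and $(f_1 \circ h) \circ (g \circ f_2) = f_1 \circ (h \circ g) \circ f_2$ lies in $\Ideal_{\cat{D}}(Y)$ by the closure of $\Ideal_{\cat{D}}$ under pre- and post-composition in $\cat{C}$ — wait: $h \circ g \colon X' \to X'$, so this is $f_1 \circ (h g) \colon Y \to \dots$; more carefully, $f_1 \circ h \colon X' \to Y$ and $g \circ f_2 \colon Y \to X'$, so their composite is an endomorphism of $Y$ obtained from $f_1 \circ f_2 \in \Ideal_{\cat{D}}(Y)$ by conjugating appropriately — indeed $(f_1 \circ h) \circ (g \circ f_2)$ and $(f_1 \circ f_2)$ are related by sandwiching with $\cat{C}$-morphisms $Y \to Y$, so ideal closure applies. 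For the tensor condition: if $f \in \Ideal(X)$, $g \in \Ideal(X')$ with factorizations through $Y, Y'$ in $\cat{D}$, then $f \otimes g$ factors through $Y \otimes Y'$, which lies in $\cat{D}$ since $\cat{D}$ is monoidal, and $(f_1 \otimes g_1) \circ (f_2 \otimes g_2) = (f_1 \circ f_2) \otimes (g_1 \circ g_2) \in \Ideal_{\cat{D}}(Y \otimes Y')$ by the tensor closure of $\Ideal_{\cat{D}}$; one also uses that $\cat{D}$ contains $I$ (traceability) so that $\Ideal(I) = \End(I)$ follows from $\Ideal_{\cat{D}}(I) = \End_{\cat{D}}(I)$ — actually more care is needed here: one must check every endomorphism of $I$ in $\cat{C}$ factors through an object of $\cat{D}$ with the required property, which holds trivially by factoring through $I$ itself. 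Finally $0_{X,X} = 0_{Y,X} \circ 0_{X,Y}$ for any $Y$ in $\cat{D}$, with $0 \circ 0 = 0 \in \Ideal_{\cat{D}}(Y)$.

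It remains to verify the trace axioms for $\Tr$. The normalization $\Tr_I(s) = s$: for $s \in \End(I)$, use the factorization $s = s \circ 1_I$ through $I$, giving $\Tr(s) = \Tr_{\cat{D}}(1_I \circ s) = \Tr_{\cat{D}}(s) = s$ by the normalization axiom for $\Tr_{\cat{D}}$. Cyclicity $\Tr(g \circ f) = \Tr(f \circ g)$ for $f \colon X \to X'$, $g \colon X' \to X$ with both composites in the respective ideals: choose a factorization $g \circ f = f_2 \circ f_1$ through $Y$ in $\cat{D}$ with $f_1 \circ f_2 \in \Ideal_{\cat{D}}(Y)$; then $f \circ g = (f \circ f_2) \circ (f_1 \circ g)$ is a factorization through the same $Y$, and $(f_1 \circ g) \circ (f \circ f_2) = f_1 \circ (g \circ f) \circ f_2 = f_1 \circ f_2 \circ f_1 \circ f_2 = (f_1 \circ f_2)^2$, whose $\Tr_{\cat{D}}$ equals $\Tr_{\cat{D}}(f_1 \circ f_2)$ since $\Tr_{\cat{D}}((f_1 f_2)(f_1 f_2)) = \Tr_{\cat{D}}((f_2 f_1)(f_2 f_1))$ — hmm, this needs the observation that $\Tr_{\cat{D}}((ab)(ab)) = \Tr_{\cat{D}}(a(ba)b) = \Tr_{\cat{D}}((ba)(ba))$ by cyclicity, but we want it to equal $\Tr_{\cat{D}}(ab)$; the cleanest route is instead to factor $g \circ f$ and separately note that any factorization of $f \circ g$ through $\cat{D}$ gives the same trace by well-definedness, and directly: $\Tr(f \circ g) = \Tr_{\cat{D}}((f_1 \circ g) \circ (f \circ f_2)) = \Tr_{\cat{D}}(f_1 \circ (g \circ f) \circ f_2)$ and then apply $\cat{D}$-cyclicity to move $f_1$ around: $= \Tr_{\cat{D}}((g \circ f) \circ f_2 \circ f_1)$ — but $f_2 \circ f_1 = g \circ f$, wait that is only its image in $\cat{C}$ not a $\cat{D}$-morphism, so one instead keeps everything as the $\cat{D}$-endomorphism $f_1 \circ f_2$ of $Y$ and uses $\Tr_{\cat{D}}(f_1 \circ (f_2 \circ f_1) \circ f_2) = \Tr_{\cat{D}}((f_1 \circ f_2) \circ (f_1 \circ f_2))$; equality of this with $\Tr_{\cat{D}}(f_1 \circ f_2)$ is exactly the subtle point and is handled by applying well-definedness to the two factorizations $g \circ f = f_2 \circ f_1 = (f_2 \circ f_1 \circ f_2) \circ f_1$ of the same $\cat{C}$-endomorphism (noting $f_2 \circ f_1 \circ f_2 = (g f) f_2 \colon Y \to X$), which forces the traces to agree. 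The multiplicativity $\Tr(f \otimes g) = \Tr(f)\,\Tr(g)$ follows from the computation $\Tr(f \otimes g) = \Tr_{\cat{D}}((f_1 \otimes g_1) \circ (f_2 \otimes g_2)) = \Tr_{\cat{D}}((f_1 \circ f_2) \otimes (g_1 \circ g_2)) = \Tr_{\cat{D}}(f_1 \circ f_2)\,\Tr_{\cat{D}}(g_1 \circ g_2) = \Tr(f)\,\Tr(g)$ by multiplicativity of $\Tr_{\cat{D}}$. Finally, if $\cat{D}$ carries a \emph{dagger} trace ideal, the dagger conditions $f \in \Ideal(X) \IMP f^\dagger \in \Ideal(X)$ and $\Tr(f^\dagger) = \Tr(f)^\dagger$ follow by daggering a factorization: $f = f_2 \circ f_1$ gives $f^\dagger = f_1^\dagger \circ f_2^\dagger$ through $Y$ in $\cat{D}$ (which is a dagger subcategory), with $f_2^\dagger \circ f_1^\dagger = (f_1 \circ f_2)^\dagger \in \Ideal_{\cat{D}}(Y)$, and $\Tr(f^\dagger) = \Tr_{\cat{D}}((f_1 \circ f_2)^\dagger) = \Tr_{\cat{D}}(f_1 \circ f_2)^\dagger = \Tr(f)^\dagger$. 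This completes the verification.
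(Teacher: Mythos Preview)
Your overall strategy --- verify the endomorphism-ideal and trace axioms directly, invoking traciality of $\cat{D}$ exactly for well-definedness of $\Tr$ --- is precisely the paper's approach; the paper's proof is a two-line sketch saying the checks are routine and that traciality gives well-definedness. Your treatment of well-definedness (handling the split monic and split epic cases via cyclicity in the appropriate order), the tensor axiom, normalization on $I$, and the dagger conditions is fine and more explicit than the paper.

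There is, however, a concrete error in your cyclicity argument. You assert that $f \circ g = (f \circ f_2) \circ (f_1 \circ g)$, but
\[
(f \circ f_2) \circ (f_1 \circ g) \;=\; f \circ (f_2 \circ f_1) \circ g \;=\; f \circ (g \circ f) \circ g \;=\; (f \circ g)^2,
\]
not $f \circ g$. Your attempted repair via ``the two factorizations $g \circ f = f_2 \circ f_1 = (f_2 \circ f_1 \circ f_2) \circ f_1$'' fails identically: the right-hand side is $(g \circ f)^2$, not $g \circ f$, so you are not comparing two factorizations of the \emph{same} endomorphism and well-definedness does not apply. Thus you never actually produce a factorization of $f \circ g$ through an object of $\cat{D}$ whose reversed composite you can relate to $f_1 \circ f_2$. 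A related soft spot occurs in your two-sided-ideal check: the claim that $(f_1 \circ h)\circ(g \circ f_2)$ arises from $f_1 \circ f_2$ by ``sandwiching with $\cat{C}$-morphisms $Y \to Y$'' is unjustified, since $f_1 \circ (h \circ g) \circ f_2$ is not of the form $p \circ (f_1 \circ f_2) \circ q$ in general. The paper's proof gives no guidance here either --- it simply declares these checks ``readily verified'' --- so the gap is shared, but your explicit computations are incorrect as written and should be reworked.
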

\begin{proof}
  One directly checks that $\Ideal(X)$ is an endomorphism ideal; in
  particular $\Ideal(I)=\cat{D}(I,I)=\cat{C}(I,I)$. Because $\cat{D}$ is tracial,
  $\Tr$ is well-defined. The axioms for the trace function are also
  readily verified.
\end{proof}

\begin{theorem}\label{thm:minimaltraceideal}
  A dagger monoidal category has a unique minimal trace ideal
  \begin{align*}
    \Ideal(X) & = \{ f \colon X \to X \mid f \text{ factors through }
    I\} \\
    \Tr(f) & = b \circ a, \text{ when } f=a \circ b \text{ with } a
    \colon I \to X \text{ and } b \colon X \to I
  \end{align*}
  and hence has any trace ideal whatsoever, if and only if it is traceable.
\end{theorem}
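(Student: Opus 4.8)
The plan is to prove the equivalence by establishing, in order: (i) if $\cat{C}$ is traceable then the pair $(\Ideal,\Tr)$ displayed in the statement is a trace ideal; (ii) it is the least trace ideal, so a trace ideal exists; and (iii) conversely, the existence of \emph{any} trace ideal forces traceability. Direction (i) is essentially an instance of Proposition~\ref{tracialsubcategory}. Take $\cat{D}$ to be the full subcategory on the unit $I$ (read up to the coherence iso $\lambda_I\colon I\otimes I\to I$, or after strictification), and equip it with the trivial trace ideal $\Ideal_{\cat{D}}(I)=\End(I)$, $\Tr_{\cat{D}}=\mathrm{id}_{\End(I)}$; its endomorphism-ideal axioms are immediate, cyclicity is commutativity of scalars, the multiplicativity axiom is the identity $\lambda_I\circ(s\otimes t)\circ\lambda_I^{-1}=s\cdot t$ for scalars, and $\Tr_{\cat{D}}(s)=s$, $\Tr_{\cat{D}}(s^\dagger)=s^\dagger$ are clear. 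Traceability says exactly that this $\cat{D}$ is tracial, so Proposition~\ref{tracialsubcategory} applies and yields a trace ideal on $\cat{C}$; unwinding its formulas, the underlying set at $X$ is $\{f_2\circ f_1\mid f_1\colon X\to I,\ f_2\colon I\to X\}$ (the side condition $f_1\circ f_2\in\Ideal_{\cat{D}}(I)$ being automatic), i.e.\ the morphisms factoring through $I$, and the trace sends $f=a\circ b$ to $\Tr_{\cat{D}}(b\circ a)=b\circ a$. This is precisely the displayed pair.

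For (ii), let $(\Ideal',\Tr')$ be any trace ideal on $\cat{C}$. Since $1_I\in\End(I)=\Ideal'(I)$, the ideal-closure axiom applied to $a\colon I\to X$, $1_I$, $b\colon X\to I$ gives $a\circ b=a\circ 1_I\circ b\in\Ideal'(X)$, so every endomorphism factoring through $I$ lies in $\Ideal'$. For such an $f=a\circ b$, cyclicity gives $\Tr'_X(a\circ b)=\Tr'_I(b\circ a)$, and the unit axiom gives $\Tr'_I(b\circ a)=b\circ a$; hence $\Tr'$ restricts on these morphisms to the displayed $\Tr$. Thus $(\Ideal,\Tr)$ is contained in every trace ideal, so it is the unique minimal one, and in particular $\cat{C}$ has a trace ideal at all.

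For (iii), suppose $\cat{C}$ carries some trace ideal $(\Ideal',\Tr')$; I must show the full subcategory on $I$ is tracial. Given $f_1\colon X\to I$, $f_2\colon I\to X$, $f'_1\colon X\to I$, $f'_2\colon I\to X$ with $f_2\circ f_1=f'_2\circ f'_1=:f$, step (ii) gives $f\in\Ideal'(X)$, and cyclicity gives $f_1\circ f_2=\Tr'(f)=f'_1\circ f'_2=:\tau$, a single scalar. When $\tau$ is invertible I would set $i:=(f'_1\circ f_2)\circ\tau^{-1}$ with tentative inverse $j:=(f_1\circ f'_2)\circ\tau^{-1}$; a short computation using only $f=f_2\circ f_1=f'_2\circ f'_1$, commutativity of scalars, and $\tau=f_1\circ f_2=f'_1\circ f'_2$ shows $i\circ f_1=f'_1$, $f_2=f'_2\circ i$, and $i\circ j=j\circ i=1_I$, so $i$ is an invertible (hence split monic) scalar with $f'_1=i\circ f_1$ and $f'_2=f_2\circ i^{-1}$, which is what the definition of a tracial subcategory demands.

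The main obstacle is the degenerate case, where $\tau=\Tr'(f)$ is \emph{not} invertible --- for instance $f=0$, or $f$ ``nilpotent'' of the shape $a\circ b$ with $b\circ a=0$ but $a,b\neq 0$. Here the naive division above breaks down and a different connecting scalar must be produced; the natural move is to bring the dagger into play and work instead with a composite such as $f_1\circ f_1^\dagger$ or $f_2^\dagger\circ f_2$, which is typically invertible when the underlying map is nonzero, together with the fact that a nonzero morphism into or out of $I$ is split. Arranging that a single scalar $i$, split monic or split epic, works uniformly in every case --- including the various degenerate ways in which the zero morphism factors through $I$ --- is the delicate point, and is exactly where the dagger hypothesis of the theorem is used.
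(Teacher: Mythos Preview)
Your parts (i) and (ii) are essentially the paper's proof: it invokes Proposition~\ref{tracialsubcategory} with $\cat{D}$ the full subcategory on $I$ (which is trivially totally traced), and then remarks that minimality follows from the first and third endomorphism-ideal axioms. You spell out more detail, but the route is identical.

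For (iii), note first that the paper's own proof does \emph{not} address the converse direction at all; it only establishes the forward implication and minimality. So your attempt there already goes beyond what the paper argues. More importantly, the obstacle you flag in the degenerate case is not merely a ``delicate point'' to be patched with the dagger: with the literal definition of \emph{tracial} given just above the theorem, the converse is actually false. In $\Hilb$, take $X=\Complex^2$, set $f_1=\bra{0}$, $f'_1=\bra{1}\colon X\to I$ and $f_2=f'_2=0\colon I\to X$; then $f_2\circ f_1=0=f'_2\circ f'_1$, yet no split monic or split epic (hence nonzero) scalar $i$ can satisfy $\bra{1}=i\cdot\bra{0}$. So $\Hilb$ is not traceable in the stated sense, although it certainly carries a trace ideal. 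What the existence of a trace ideal \emph{does} force --- and what you correctly derived at the opening of (iii) from cyclicity together with the unit axiom --- is only the weaker condition that $a\circ b=a'\circ b'$ implies $b\circ a=b'\circ a'$, which is exactly well-definedness of the displayed $\Tr$. The ``only if'' should be read against that weaker notion; no amount of dagger manipulation will upgrade it to the stronger one, so your proposed rescue cannot succeed.
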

\begin{proof}
  That the given data form a trace ideal follows from the previous
  proposition, because the full subcategory consisting of just the
  monoidal unit $I$ is certainly (totally) traced. To see that this
  trace ideal is minimal, \ie that any trace ideal must contain this
  one, follows from the first and third axioms of endomorphism ideal.
\end{proof}

As a consequence of the previous theorem, the evaluation of
measurements on pure states is completely determined by the structure of the
category, independent of the trace ideal. If $s=\psi \circ \psi^\dag$
is a pure state on $X$, and $\{P_o\}$ a measurement, then for every outcome $o$:
\[
  \Tr(s \circ P_o)
  = \Tr(\psi \circ \psi^\dag \circ P_o)
  = \Tr(\psi^\dag \circ P_o \circ \psi)
  = \psi^\dag \circ P_o \circ \psi.
\]
Therefore, the only possible freedom the choice of a trace ideal
brings comes out in behaviour on mixed states.  

\subsection{Uniqueness}

We now consider uniqueness of trace ideals. The following
proposition proves that trace ideals are a categorical invariant, in
the sense that they are preserved under equivalence. A dagger monoidal
equivalence is a pair of functors $F \colon \cat{C} 
\to \cat{D}$ and $G \colon \cat{D} \to \cat{C}$ that form an
equivalence of categories, such that $F(f^\dag)=F(f)^\dag$ and
$G(f^\dag)=G(f)^\dag$, and there are natural isomorphisms $F(I) \cong
I$, $G(I) \cong I$, $F(X \otimes Y) \cong F(X) \otimes F(Y)$ and $G(X
\otimes Y) \cong G(X) \otimes G(Y)$ that interact with the coherence
isomorphisms in the appropriate way.

\begin{proposition}
  Trace ideals are preserved under dagger monoidal equivalence:
  if $F \colon \cat{C} \to \cat{D}$ and $G \colon \cat{D} \to \cat{C}$
  are strong monoidal functors that preserve daggers and form an equivalence
  of categories, and $(\Ideal,\Tr^{\Ideal})$ is a trace ideal in 
  $\cat{C}$, then
  \begin{align*}
    \mathcal{J}(X) & = G^{-1}(\Ideal(G(X))) = \{ g \in \cat{D}(X,X) \mid
    G(g) \in \Ideal(G(X)) \}, \\        
    \Tr^{\mathcal{J}}_X(g) & = F(\Tr^{\mathcal{I}}_{G(X)}(G(g))),
  \end{align*}
  form a trace ideal in $\cat{D}$.
\end{proposition}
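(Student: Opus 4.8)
The plan is to verify directly that the proposed pair $(\mathcal{J}, \Tr^{\mathcal{J}})$ satisfies the axioms of a dagger trace ideal in $\cat{D}$, transporting each axiom across the equivalence via $G$ and then applying $F$ where necessary. The key observation throughout is that since $G$ is a dagger-preserving strong monoidal equivalence, it reflects and preserves all the relevant structure: composition, tensor, dagger, zero morphisms, and — crucially — the scalars, since $G$ restricts to an isomorphism $\cat{D}(I,I) \cong \cat{C}(I,I)$ with inverse induced by $F$ (using $G(I)\cong I$ and fully-faithfulness of $G$). I would first record these preliminaries: $G$ is faithful and full, $G(g\circ h) = G(g)\circ G(h)$, $G(g^\dag) = G(g)^\dag$, $G(0) = 0$, and $G(g\otimes h)$ is isomorphic (naturally, compatibly with coherence) to $G(g)\otimes G(h)$. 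I would also note that $F\circ G \cong 1_{\cat{C}}$ and $G\circ F \cong 1_{\cat{D}}$, so $F$ provides a two-sided inverse up to coherent isomorphism; in particular the composite $F \circ \Tr^{\mathcal{I}} \circ G$ lands in $\cat{D}(I,I)$ as required.

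Next I would check $\mathcal{J}$ is a dagger endomorphism ideal. For the two-sided ideal condition, if $g\in\mathcal{J}(X)$ and $a\colon X\to Y$, $b\colon Y\to X$ in $\cat{D}$, then $G(a\circ g\circ b) = G(a)\circ G(g)\circ G(b)\in\Ideal(G(X))$... wait — more carefully, $G(b\circ g\circ a)$ for the endomorphism on $Y$ of the form $b g a$ with $b\colon X\to Y$; in any case one uses that $\Ideal$ is a two-sided ideal in $\cat{C}$ and that $G$ preserves composition. Closure under tensor uses the natural iso $G(X\otimes Y)\cong G(X)\otimes G(Y)$: if $G(g)\in\Ideal(G(X))$ and $G(h)\in\Ideal(G(Y))$ then $G(g)\otimes G(h)\in\Ideal(G(X)\otimes G(Y))$, and conjugating by the structural iso (which the ideal absorbs) gives $G(g\otimes h)\in\Ideal(G(X\otimes Y))$. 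The condition $\mathcal{J}(I) = \cat{D}(I,I)$ follows since $\Ideal(G(I)) = \Ideal(I) = \cat{C}(I,I)$ modulo the iso $G(I)\cong I$, and everything maps into it. That $0\in\mathcal{J}(X)$ and closure under dagger follow from $G(0)=0\in\Ideal(G(X))$ and $G(g^\dag) = G(g)^\dag\in\Ideal(G(X))$.

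Then I would verify the trace axioms. Cyclicity: for $g\colon X\to Y$, $h\colon Y\to X$ in $\cat{D}$ with the relevant composites in $\mathcal{J}$, apply $G$ to get $G(h)\circ G(g)$ and $G(g)\circ G(h)$ in $\Ideal$, use cyclicity of $\Tr^{\Ideal}$ in $\cat{C}$, then apply $F$ to both sides, noting $F(\Tr^{\Ideal}_{G(Y)}(G(g)\circ G(h))) = F(\Tr^{\Ideal}_{G(X)}(G(h)\circ G(g)))$, which is exactly $\Tr^{\mathcal{J}}_Y(g\circ h) = \Tr^{\mathcal{J}}_X(h\circ g)$. Multiplicativity over tensor: $\Tr^{\mathcal{J}}_{X\otimes Y}(g\otimes h) = F(\Tr^{\Ideal}_{G(X\otimes Y)}(G(g\otimes h)))$; rewrite $G(g\otimes h)$ via the structural iso as (a conjugate of) $G(g)\otimes G(h)$, use that $\Tr^{\Ideal}$ is invariant under conjugation by isomorphisms (a consequence of cyclicity), apply multiplicativity of $\Tr^{\Ideal}$ to get $\Tr^{\Ideal}_{G(X)}(G(g))\cdot\Tr^{\Ideal}_{G(Y)}(G(h))$, and then apply $F$, using that $F$ is a monoidal functor on scalars so it takes the product of scalars to the product of their images — here I use that $F$ restricted to $\cat{C}(I,I)\to\cat{D}(I,I)$ is a monoid homomorphism. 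The normalization $\Tr^{\mathcal{J}}_I(s) = s$ and the dagger compatibility $\Tr^{\mathcal{J}}(g^\dag) = \Tr^{\mathcal{J}}(g)^\dag$ follow similarly. I expect the main obstacle to be the bookkeeping around the structural isomorphisms in the tensor-multiplicativity step — specifically, confirming that $G(g\otimes h)$ and $G(g)\otimes G(h)$ differ only by conjugation with coherence isos that the trace ideal absorbs, and that $F$ genuinely acts as a monoid homomorphism on scalars (which requires unwinding $F(I)\cong I$ together with the monoidal structure on $F$). Everything else is a routine transport of axioms along the equivalence.
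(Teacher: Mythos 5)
Your proposal is correct and follows essentially the same route as the paper's own (much terser) proof: transport each axiom along the equivalence via $G$, handle the strong-monoidal structural isomorphisms by the observation that $\Tr^{\Ideal}$ is invariant under conjugation by isomorphisms, and use $F\circ G\cong 1$ for the normalization $\Tr^{\mathcal{J}}_I(s)=s$. The paper states the iso-invariance of the trace as an upfront lemma where you derive it from cyclicity, but this is the same idea; your write-up simply fills in more of the bookkeeping that the paper leaves implicit.
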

\begin{proof}
  First, observe that if $f \in \Ideal(X)$, and $g \colon X \to Y$ is
  an isomorphism with inverse $h$, then $\Tr(f) = \Tr(gfh)$. Then, to
  verify that $\mathcal{J}$ is an endomorphism ideal, the first
  requirement follows from functoriality of $G$; the second from the
  fact that $G$ is monoidal; and the third from fullness of $G$
  together with monoidality of $G$. It is a dagger endomorphism ideal
  because $G$ preserves daggers. Verifying that $\Tr^{\mathcal{J}}$ satisfies the
  requirements is completely analogous, except that the last 
  condition additionally uses $F(G(s)) \cong s$.
\end{proof}

However, trace ideals need not be unique. In fact, there may even be
more than one trace function making a fixed endomorphism ideal into a
trace ideal, as the following example shows.

\begin{example}\label{tracenotunique}
  A \emph{tracial state} on a C*-algebra $A$ is a linear map $\tau
  \colon A \to \Complex$ satisfying $\tau(a^*a) \geq 0$,
  $\tau(1)=1$, and $\tau(ab)=\tau(ba)$. There exists a unital
  C*-algebra $A$ with distinct tracial states $\tau \neq \tau' \colon
  A \to \Complex$~\cite{longo:crossedproduct}. 

  Make a category $\cat{C}$ as follows. Objects are natural
  numbers. There are only endomorphisms. Morphisms $0 \to 0$ are
  complex numbers; the identity is 0, and composition is
  addition. For $n\geq 1$, morphisms
  $n \to n$ are elements of the $n$-fold direct sum $A \oplus A \oplus
  \cdots \oplus A$; the identity is $(1,1,\ldots,1)$, and composition
  is pointwise multiplication. 

  We give this category a monoidal structure by letting the tensor
  product of objects $n$ and $m$ be $n+m$. If one of $n$ or $m$ is
  $0$, the action on morphisms is by scalar multiplication. For $n,m
  \geq 1$, the action on morphisms is clear. The monoidal unit is the
  object $0$. 

  Taking $\Ideal(X)$ to be all endomorphisms on $X$ certainly gives an
  endomorphism ideal. Define $\Tr_0(z)=z$, and
  $\Tr_n(a_1,\ldots,a_n)=\sum_{i=1}^n \tau(a_i)$ for $n \geq 1$. This
  satisfies all the conditions needed to make $\Ideal$ into a trace
  ideal. But the very same construction with $\tau'$ gives a different
  trace function.
\end{example}

The previous example is in stark contrast to Cartesian categories or
compact categories, where traces are unique;
see~\cite{simpsonplotkin:fixedpoint} and~\cite{hasegawa:trace},
respectively. The counterexample above is somewhat artificial, because
all morphisms are endomorphisms. It remains unclear
whether trace ideals on, for example, compact categories, are unique.

\subsection{The need for trace ideals}

We will now show that in the category $\Hilb$, there exists no trace
ideal consisting of all morphisms. More precisely, we will show that
$\Hilb$ is not an instance of the established notion of \emph{traced
monoidal category}~\cite{joyalstreetverity:traced}. This notion asks not just for
traces of all endomorphisms, but requires a `partial 
trace' of morphisms $f \colon X \otimes U \to Y \otimes U$, resulting
in a morphism $\Tr^U(f) \colon X \to
Y$. There are then several additional axioms, such as the following naturality:
\[
  \Tr^U(f) \circ g = \Tr^U(f \circ (g \otimes \id[U])) \quad\text{ for }
  f \colon X \otimes U \to Y \otimes U, g \colon X' \to X.
\]
We will now show that the monoidal category $(\Hilb,\otimes)$
cannot be traced monoidal. Subsequently, we will show that it
\emph{does} have a trace ideal. This justifies working with trace
ideals in monoidal categories instead of traced monoidal
categories. We are indebted to Peter Selinger for the following proof.

\begin{lemma}\label{lem:traceadditive}
  Suppose $(\Hilb,\otimes)$ is traced monoidal. Then $\Tr(f + g) =
  \Tr(f) + \Tr(g)$ for all endomorphisms $f,g\colon H \to H$.
\end{lemma}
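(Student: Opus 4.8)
The plan is to use the additivity of the trace in the monoidal unit direction, together with a clever use of partial trace over a qubit $U = \Complex^2$ to split a direct sum into two summands. The key observation is that for Hilbert spaces, $H \oplus H \cong H \otimes \Complex^2$, and under this isomorphism a pair of endomorphisms $(f,g)$ on $H$ corresponds to a block-diagonal operator on $H \otimes \Complex^2$. Taking the partial trace $\Tr^{\Complex^2}$ of this block-diagonal operator should produce $f + g$ on $H$, since partial trace over the qubit sums the two diagonal blocks. On the other hand, the trace of the whole endomorphism on $H \otimes \Complex^2$, computed via the trace axioms, is $\Tr(f) + \Tr(g)$ (again because trace of a block-diagonal operator is the sum of traces of the blocks). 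Combining these via the defining relation between trace and partial trace, $\Tr(\Tr^U(h)) = \Tr(h)$, should yield $\Tr(f+g) = \Tr(f) + \Tr(g)$.

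More concretely, first I would fix the isomorphism $\phi \colon H \otimes \Complex^2 \to H \oplus H$ and let $\iota_1, \iota_2$ and $\pi_1, \pi_2$ be the injections and projections of the biproduct $H \oplus H$ (which exists in $\Hilb$). Given $f, g \colon H \to H$, form $h := \phi^{-1} \circ (\iota_1 \circ f \circ \pi_1 + \iota_2 \circ g \circ \pi_2) \circ \phi \colon H \otimes \Complex^2 \to H \otimes \Complex^2$. Next I would compute $\Tr(h)$ in two ways. On one side, using the trace axioms (cyclicity, and $\Tr(f \otimes g) = \Tr(f)\Tr(g)$ applied to the rank-one diagonal projections $e_{ii}$ on $\Complex^2$, or more directly by writing $h$ in terms of the $e_{ij}$-components), one gets $\Tr(h) = \Tr(f) + \Tr(g)$. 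On the other side, I would show $\Tr^{\Complex^2}(h) = f + g$ by a direct computation with the matrix units, and then invoke the traced-monoidal axiom $\Tr(\Tr^{\Complex^2}(h)) = \Tr(h)$ (which follows from the vanishing/yanking axioms, or can be taken as part of the package relating the parametrized trace to the scalar trace on the unit). Equating the two gives the claim.

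The main obstacle I anticipate is being careful about what "the trace" means here and how the scalar trace $\Tr \colon \End(H) \to \End(I) = \Complex$ relates to the partial trace $\Tr^U$ supplied by the traced-monoidal structure — one needs the identity $\Tr(\Tr^U(h)) = \Tr_{H \otimes U}(h)$, which is not literally one of the listed axioms but is a standard consequence of them (it follows by regarding the full trace as $\Tr^{H}$ composed appropriately, using the superposing and yanking axioms of a traced monoidal category). A secondary subtlety is verifying that the partial trace of the block-diagonal $h$ really is $f+g$ and not, say, $\frac{1}{2}(f+g)$ or some scalar multiple; this depends on the normalization convention in the identification $H \otimes \Complex^2 \cong H \oplus H$, but any consistent choice makes the two sides scale identically, so the additivity conclusion is robust. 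I would therefore present the argument at the level of matrix units $e_{ij} \colon \Complex^2 \to \Complex^2$, writing $h = f \otimes e_{11} + g \otimes e_{22}$ and $\Tr^{\Complex^2}(h) = f \circ (\text{something}) + g \circ (\text{something})$, so that the bookkeeping is transparent.
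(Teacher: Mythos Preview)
Your plan is circular. Both of the computations you propose rely on additivity of the (partial) trace, which is precisely the statement you are trying to prove. To obtain $\Tr(h) = \Tr(f) + \Tr(g)$ from $h = f \otimes e_{11} + g \otimes e_{22}$ via the multiplicativity axiom $\Tr(f \otimes e) = \Tr(f)\Tr(e)$, you must first split the trace across the sum; and to show $\Tr^{\Complex^2}(h) = f + g$ ``by a direct computation with the matrix units'' you likewise need $\Tr^{\Complex^2}$ to be additive. Neither form of additivity is among the traced monoidal axioms, and neither follows from them without an argument of exactly the kind the lemma is asking for.

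The paper's proof sidesteps this by tracing out $H$ rather than $\Complex^2$. It considers the (non-endo) morphism $F \colon \Complex^2 \otimes H \to H$ given by the row $(f\;g)$, so that $F \circ (\ket{0} \otimes \id_H) = f$, $F \circ (\ket{1} \otimes \id_H) = g$, and $F \circ (\ket{+} \otimes \id_H) = f+g$. Naturality of the partial trace $\Tr^H$ then gives $\Tr(F \circ (\ket{v} \otimes \id_H)) = \Tr(F) \circ \ket{v}$ for each $v \in \{0,1,+\}$. The crucial point is that the addition now appears \emph{outside} the trace, in a composition $\Tr(F) \circ \ket{+} = \Tr(F) \circ \ket{0} + \Tr(F) \circ \ket{1}$, where bilinearity of composition in $\Hilb$ is available for free. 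Your setup traces out the wrong factor to exploit this: tracing out $\Complex^2$ keeps the sum trapped inside the trace, where you have no tool to distribute it.
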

\begin{proof}
  Choose an orthonormal basis $\{\ket{0},\ket{1}\}$ for
  $\Complex^2$, and write $\ket{+} = \ket{0} + \ket{1}$. Recall that
  $\Complex^2 \otimes H \cong H \oplus H$. Define $F \colon
  \Complex^2 \otimes H \to H$ via the block matrix
  $\begin{pmatrix}f&g\end{pmatrix}$. Hence $F \circ (\ket{0} \otimes
  \id[H]) = f$ and $F \circ (\ket{1} \otimes \id[H]) = g$. Now:
  \begin{align*}
    \Tr(f+g)
    & = \Tr(F \circ (\ket{+} \otimes \id[H])) \\
    & = \Tr(F) \circ \ket{+} \eqcomment{(by naturality)} \\
    & = (\Tr(F) \circ \ket{0}) + (\Tr(F) \circ \ket{1}) \\
    & = \Tr(F \circ (\ket{0} \otimes \id[H])) + \Tr(F \circ (\ket{1}
    \otimes \id[H]) \eqcomment{(by naturality)} \\
    & = \Tr(f) + \Tr(g).
  \end{align*}
  The third equality uses that composition is bilinear.
\end{proof}

\begin{theorem}\label{thm:notracehilb}
  The monoidal category $(\Hilb,\otimes)$ is not traced monoidal.
\end{theorem}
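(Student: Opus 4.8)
The plan is to derive a contradiction from Lemma~\ref{lem:traceadditive} by exhibiting endomorphisms on an infinite-dimensional Hilbert space whose trace would have to behave inconsistently. First I would recall that if $(\Hilb,\otimes)$ were traced monoidal, then in particular $\Tr(\id[H]) = \Tr(\id[\Complex] \otimes \id[H]) = \Tr(\id[\Complex]) \cdot (\text{something})$; more to the point, the trace of the identity on $H$ should be the ``dimension'' of $H$ as a scalar, i.e.\ an element of $\Complex \cong \Hilb(I,I)$. On a finite-dimensional space this recovers the integer $\dim H$, but on an infinite-dimensional $H$ there is no consistent value available: this is exactly the tension the additivity lemma is designed to exploit.

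The key step is the following. Take $H$ infinite-dimensional, so $H \cong H \oplus H$ via some unitary, and hence there are isometries $v, w \colon H \to H$ with $v v^\dagger + w w^\dagger = \id[H]$ and $v^\dagger v = w^\dagger w = \id[H]$ (the two-sided shift decomposition). By the trace axiom $\Tr(f \circ g) = \Tr(g \circ f)$ together with the first endomorphism-ideal-style closure built into the traced structure, $\Tr(v v^\dagger) = \Tr(v^\dagger v) = \Tr(\id[H])$ and likewise $\Tr(w w^\dagger) = \Tr(\id[H])$. But by Lemma~\ref{lem:traceadditive}, $\Tr(\id[H]) = \Tr(v v^\dagger + w w^\dagger) = \Tr(v v^\dagger) + \Tr(w w^\dagger) = \Tr(\id[H]) + \Tr(\id[H])$, forcing $\Tr(\id[H]) = 0$ in $\Complex$. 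On the other hand, repeating the same argument with a decomposition of $H$ into a one-dimensional summand plus a copy of $H$, write $\id[H] = p + q$ where $p$ is a rank-one projector and $q q^\dagger$-type arguments give $\Tr(q\text{-part}) = \Tr(\id[H]) = 0$, while $\Tr(p) = 1$ (the trace of a rank-one projector on a line is the scalar $1$, by the pure-state computation: $\Tr(\psi\circ\psi^\dagger) = \psi^\dagger\circ\psi = 1$). Then additivity gives $0 = \Tr(\id[H]) = \Tr(p) + \Tr(q\text{-part}) = 1 + 0 = 1$, a contradiction.

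I expect the main obstacle to be bookkeeping: making precise which ``parts'' of the identity are being summed and checking that each summand is genuinely of the form $a a^\dagger$ (or a composite) so that the trace axioms $\Tr(ab)=\Tr(ba)$ apply to identify its trace with $\Tr(\id[H])$. The cleanest route is to fix a unitary $H \cong \Complex \oplus H$, let $p \colon H \to H$ be the corresponding projection onto the $\Complex$ summand and $u \colon H \to H$ the partial isometry implementing $H \cong H$ on the complementary summand (so $u^\dagger u = \id[H]$ and $u u^\dagger = \id[H] - p$), and then the two displayed facts $\Tr(\id[H]) = \Tr(u^\dagger u) = \Tr(u u^\dagger) = \Tr(\id[H]-p)$ and $\Tr(\id[H]) = \Tr(p) + \Tr(\id[H]-p)$ combine with $\Tr(p)=1$ to give $0 = 1$. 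The only subtlety is that $\Tr(u^\dagger u) = \Tr(u u^\dagger)$ requires the cyclicity axiom applied to $u \colon H \to H$ and $u^\dagger \colon H \to H$, which is immediate, and that $\Tr$ is well-defined on all endomorphisms, which is exactly the hypothesis being refuted. Hence no such trace exists, so $(\Hilb,\otimes)$ is not traced monoidal.
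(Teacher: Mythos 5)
Your proof is correct and is essentially the paper's argument: both exploit Lemma~\ref{lem:traceadditive} together with cyclicity of the trace along an isomorphism $H \cong \Complex \oplus H$ to force $\Tr(\id[H]) = \Tr(\id[H]) + 1$, with the rank-one projector contributing the $1$ exactly as the paper's $f_3$ does. The only difference is presentational --- you work directly on $H$ via the partial isometry $u$ with $u^\dagger u = \id[H]$ and $uu^\dagger = \id[H] - p$, whereas the paper phrases the same shift as a conjugation of block matrices on $H \oplus \Complex \oplus H$.
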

\begin{proof}
  Suppose $(\cat{Hilb},\otimes)$ was traced monoidal. Let $H$ be an
  infinite-dimensional Hilbert space. Then there exist isomorphisms $\varphi
  \colon H  \oplus \mathbb{C}\stackrel{\cong}{\to} H$ and
  $\psi\colon H\stackrel{\cong}{\to} \mathbb{C}\oplus
  H$. Write them in block matrix form as $\varphi = \begin{pmatrix}
    \varphi_1 & \varphi_2 \end{pmatrix}$ and $\psi = \begin{pmatrix}
    \psi_1 \\ \psi_2 \end{pmatrix}$. Consider the morphisms $f_1,f_2,f_3
  \colon H \oplus \mathbb{C} \oplus H \to H \oplus \mathbb{C} \oplus H$ given by the following
  block matrices.
  \[
    f_1 = \begin{pmatrix} 1 & 0 & 0 \\ 0 & 0 & 0 \\ 0 & 0 & 0 \end{pmatrix}
   \qquad
    f_2 = \begin{pmatrix} 1 & 0 & 0 \\ 0 & 1 & 0 \\ 0 & 0 & 0 \end{pmatrix}
   \qquad
    f_3 = \begin{pmatrix} 0 & 0 & 0 \\ 0 & 1 & 0 \\ 0 & 0 & 0 \end{pmatrix}
  \]
  Let $g = \varphi \oplus \psi \colon (H \oplus \mathbb{C}) \oplus H
  \to H \oplus (\mathbb{C} \oplus H)$. Then 
  \begin{align*}
    g \circ f_2 
    & = \begin{pmatrix} \varphi_1 & \varphi_2 & 0 \\ 0 & 0 & \psi_1 \\
      0 & 0 & \psi_2 \end{pmatrix} \circ \begin{pmatrix} 1 & 0 & 0 \\
      0 & 1 & 0 \\ 0 & 0 & 0 \end{pmatrix}
      = \begin{pmatrix} \varphi_1 & \varphi_2 & 0 \\ 0 & 0 & 0 \\ 0 &
        0 & 0 \end{pmatrix} \\
    & = \begin{pmatrix} 1 & 0 & 0 \\ 0 & 0 & 0 \\ 0 & 0 &
      0 \end{pmatrix} \circ \begin{pmatrix} \varphi_1 & \varphi_2 & 0 \\ 0 & 0 & \psi_1 \\
      0 & 0 & \psi_2 \end{pmatrix} = f_1 \circ g.
  \end{align*}
  Hence
  \[
    \Tr(f_1) = \Tr(f_1 \circ g \circ g^{-1}) = \Tr(g \circ f_2
    \circ g^{-1}) = \Tr(f_2 \circ g^{-1} \circ g) = \Tr(f_2).
  \]
  But $\Tr(f_2)=\Tr(f_1+f_3)=\Tr(f_1)+\Tr(f_3)$ by
  Lemma~\ref{lem:traceadditive}. 
  And because $f_3$ has finite rank, we know that
  $\Tr(f_3)=\Tr(\id[\Complex])=1$.
  Thus $\Tr(f_2) = \Tr(f_2) + 1$, which is a contradiction.
\end{proof}

\subsection{Dual objects in $\Hilb$ are finite-dimensional}

The previous theorem allows an interesting corollary. Recall that the
main characteristic of compact categories is that objects have duals:
objects $L,R$ in a monoidal category are called \emph{dual} when 
there are maps $\eta \colon I \to R \otimes L$ and $\varepsilon \colon
L \otimes R \to I$ making the following two composites identities.
\begin{align*}
  L \cong L \otimes I \sxto{\id \otimes \eta} L \otimes (R 
  \otimes L) \cong (L \otimes R) \otimes L \sxto{\varepsilon \otimes
    \id} I \otimes L \cong L \\
  R \cong I \otimes R \sxto{\eta \otimes \id} (R \otimes L) \otimes R
  \cong R \otimes (L \otimes R) \sxto{\id \otimes \varepsilon} R
  \otimes I \cong R
\end{align*}
It is well-known that if $H \in \Hilb$ is finite-dimensional,
then $H$ and $H^*$ are dual objects by $\eta(1) = \sum_{i=1}^n \ket{i}
\otimes \bra{i}$ and $\varepsilon(\ket{i}) =  1$, for any choice of
orthonormal basis $\{\ket{i}\}_{i=1,\ldots,n}$ for $H$;
see~\cite{kellylaplaza:compactcategories,joyalstreetverity:traced,abramsky2008categorical}. This recipe does not work when $H$ is infinite-dimensional, because $\sum_i
\ket{i}$ does not converge in that case. However, this does not
exclude the possibility that there might be other
$H^*,\eta,\varepsilon$ making $H$ into a dual object. No rigorous
proof that infinite-dimensional Hilbert spaces cannot have duals has
been published, as far as we know. 

\begin{corollary}
  Objects in $(\Hilb,\otimes)$ with duals are precisely
  finite-dimensional Hilbert spaces.
\end{corollary}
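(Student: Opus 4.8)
The plan is to reduce the statement to Theorem~\ref{thm:notracehilb}. The implication from finite-dimensionality to the existence of a dual is classical and was already recalled above. So the real content is the converse, and I would show it in the form: \emph{no infinite-dimensional Hilbert space admits a dual}.

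Suppose, for a contradiction, that some infinite-dimensional $H \in \Hilb$ has a dual, and let $\mathcal{D} \subseteq \Hilb$ be the full subcategory of those objects that admit a dual. The structural facts I would establish are: (i) $\mathcal{D}$ is a symmetric monoidal subcategory, since $I = \Complex$ is self-dual and a tensor of dualizable objects is dualizable, with dual the tensor of the duals; (ii) $\mathcal{D}$ contains every finite-dimensional Hilbert space, by the forward implication; (iii) $\mathcal{D}$ is closed under finite biproducts, because $\otimes$ distributes over $\oplus$ in $\Hilb$, so that $L^* \oplus M^*$ is a dual of $L \oplus M$ via the evident block-diagonal $\eta$ and $\varepsilon$; and (iv) since every object of $\mathcal{D}$ has a dual by construction, $\mathcal{D}$ is (symmetric) compact closed, hence traced monoidal via its canonical trace~\cite{joyalstreetverity:traced}. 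I would also note that this canonical trace is independent of the chosen dualities and restricts, on finite-dimensional Hilbert spaces, to the ordinary trace; in particular every finite-rank endomorphism inside $\mathcal{D}$ has categorical trace equal to its ordinary trace.

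Next I would observe that the proofs of Lemma~\ref{lem:traceadditive} and Theorem~\ref{thm:notracehilb} take place entirely inside $\mathcal{D}$: the only objects that appear are $\Complex$, $\Complex^2$, $H$, $H \oplus \Complex$ and $V := H \oplus \Complex \oplus H$, all of which lie in $\mathcal{D}$ by (i)--(iii); and the only facts used about them are that their hom-sets are complex vector spaces with bilinear composition, that the isomorphisms $\Complex^2 \otimes H \cong H \oplus H$, $H \oplus \Complex \cong H$ and $H \cong \Complex \oplus H$ exist (they do, in $\Hilb$, between objects of $\mathcal{D}$), and that the finite-rank endomorphism $f_3$ has trace $\Tr(\id[\Complex]) = 1$. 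Since $\mathcal{D}$ is traced monoidal, Lemma~\ref{lem:traceadditive} applies with its ``$H$'' taken to be $V$, giving additivity of the trace on $\End(V)$; the block computation of Theorem~\ref{thm:notracehilb} then forces $\Tr(f_1) = \Tr(f_2)$ together with $\Tr(f_2) = \Tr(f_1) + \Tr(f_3) = \Tr(f_1) + 1$, hence $\Tr(f_2) = \Tr(f_2) + 1$: a contradiction. Thus no infinite-dimensional Hilbert space has a dual, which together with the forward implication proves the corollary.

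The step I expect to be the main obstacle is (iii) together with the trace fine point: one must verify the triangle identities for the block-diagonal $\eta,\varepsilon$ on $L \oplus M$, and confirm that the canonical trace of the compact closed category $\mathcal{D}$ agrees on finite-rank maps with the ordinary trace --- this is precisely what licenses reusing ``$\Tr(f_3) = \Tr(\id[\Complex]) = 1$'' from the proof of Theorem~\ref{thm:notracehilb} inside $\mathcal{D}$. Both points are standard, but this is where the genuine work lies; once they are in place, the corollary is just a repackaging of Theorem~\ref{thm:notracehilb}.
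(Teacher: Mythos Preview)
Your proposal is correct and follows essentially the same approach as the paper: assume an infinite-dimensional $H$ has a dual, observe that the objects needed in the proof of Theorem~\ref{thm:notracehilb} (namely $\Complex$, $H$, $H \oplus H$, $H \oplus \Complex$, $H \oplus \Complex \oplus H$) all acquire duals because $\Complex$ is self-dual and dualizables are closed under $\oplus$, use the canonical compact-closed trace on these objects, and replay the contradiction. The only difference is presentational: you organize things via the full subcategory $\mathcal{D}$ of dualizable objects and invoke its compact-closed structure globally, whereas the paper speaks of the trace equations holding ``locally'' for the finitely many objects in play; your explicit flagging of the checks for (iii) and for $\Tr(f_3)=1$ is exactly what the paper leaves implicit.
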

\begin{proof}
  Let $H$ be an infinite-dimensional Hilbert space.
  Suppose $H$ has a dual object $H^*$. For $f \colon H \to H$,
  define $\Tr^H(f)$ as the following composite. 
  \[
  I \sxto{\eta} H^* \otimes H \cong H \otimes H^* \sxto{f \otimes
    \id[H^*]} H \otimes H^* \sxto{\varepsilon} I
  \]
  This satisfies all equations for a trace function, as far as these
  make sense `locally', for just one object $H$. In $\cat{Hilb}$, the
  object $\Complex$ always has a dual, and if $H$ and $K$ have duals,
  then so does $H \oplus K$. 
  Now, notice that the proof of Theorem~\ref{thm:notracehilb} only
  uses the trace properties `locally', \ie\ for the objects $\Complex,
  H, \Complex^2 \otimes H \cong H \oplus H, H \oplus \Complex, H
  \oplus \Complex \oplus H$. Hence the contradiction it results in
  holds here, too.
\end{proof}

In fact, in any monoidal category with biproducts, one can show that
if $A \cong A \oplus I$, then $\Tr^A(\id[A]) = \Tr^A(\id[A]) + 1$.
We thank Jamie Vicary for this observation.

\subsection{Trace class maps form a trace ideal in $\Hilb$}

To show that the usual trace of continuous linear maps between
Hilbert
spaces does in fact give a trace ideal requires some work, as virtually all 
textbooks only consider endomorphisms, whereas
the defining conditions of trace ideals also involve morphisms between
different objects.

We need to recall some terminology; for any unexplained terms, we
refer to~\cite{blankexnerhavlicek:hilbert}. Other good references
are~\cite{garling:inequalities, simon:traceideals}.
A linear map $f \colon H \to K$ between Hilbert spaces is
\emph{Hilbert-Schmidt} when $\sum_n \|f(e_n)\|_K^2 < \infty$ for an
orthonormal basis $(e_n)$ of $H$. A positive continuous linear map $f \colon H \to
H$ is \emph{trace class} when $\sum_n
\big|\inprod{e_n}{f(e_n)} \big|< \infty$ for an orthonormal basis
$(e_n)$ of $H$. An arbitrary continuous linear map $f\colon H \to H$
is trace class when its absolute value $|f| \colon H \to H$ is trace class.
Both definitions are independent of the choice of basis
$(e_n)$. If $f$ is trace class, then $\inprod{e_n}{f(e_n)}$ is
absolutely summable, and hence the following \emph{trace property} holds:
\[
  \Tr(f)=\sum_n \inprod{e_n}{f(e_n)}
\]
is a well-defined complex number. The \emph{Cauchy-Schwarz inequality}
states that 
\[
  \big| \inprod{x}{y} \big| \leq \big( \|x\|^2 \cdot \|y\|^2 \big)^{1/2}
\]
for any two elements $x,y$ of a Hilbert space. The \emph{H{\"o}lder
inequality} states that 
\[
  \sum_n |x_n \cdot y_n| \leq \big(\sum_n |x_n|^2\big)^{1/2} \cdot
  \big(\sum_n |y_n|^2\big)^{1/2} 
\]
for any two sequences $(x_n)$ and $(y_n)$ of complex numbers
with $\sum_n |x_n|^2<\infty$ and $\sum_n |y_n|^2<\infty$. 

\begin{lemma}\label{lem:hilbertschmidt}
  Let ${\xymatrix@1{H \ar@<.75ex>^-{f}[r] & K \ar^-{g}[l]}}$
  be morphisms in $\Hilb$. Then $g \circ f$ is trace class if and only
  if $f$ and $g$ are Hilbert-Schmidt.
\end{lemma}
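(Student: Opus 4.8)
The plan is to prove the two implications separately, in each case reducing to the scalar Cauchy-Schwarz and H{\"o}lder inequalities recalled above, applied to the defining Hilbert-Schmidt sums $\sum_n \|f(e_n)\|_K^2$ over an orthonormal basis $(e_n)$ of $H$ and $\sum_m \|g(d_m)\|_H^2$ over an orthonormal basis $(d_m)$ of $K$. The one recurring nuisance is that $f$, $g$ and $g \circ f$ act between different, possibly infinite-dimensional, spaces, whereas the textbook statements are usually phrased for a single space. So the first, purely preliminary, step is to record the standard basis-independence of these sums, together with the identity $\sum_n \|g^\dagger(e_n)\|_K^2 = \sum_m \|g(d_m)\|_H^2$ --- both sides expand as $\sum_{n,m} |\langle e_n \mid g(d_m)\rangle|^2$ --- from which the Hilbert-Schmidt condition is seen to be invariant under the dagger and under pre- or post-composition with a bounded map, the resulting Hilbert-Schmidt sum being bounded by the squared operator norm of the bounded factor times the original one.

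For the implication from $f$ and $g$ being Hilbert-Schmidt to $g \circ f$ being trace class, I would first pass to the absolute value, since by definition $g \circ f$ is trace class exactly when the positive operator $|g \circ f|$ is. Using the polar decomposition $g \circ f = u \circ |g \circ f|$ with $u$ a partial isometry on $H$, we have $|g \circ f| = u^\dagger \circ g \circ f = (u^\dagger \circ g) \circ f$, and $u^\dagger \circ g$ is again Hilbert-Schmidt by the preliminary step. Since $|g \circ f|$ is positive, every term of the series $\sum_n \langle e_n \mid |g \circ f|(e_n)\rangle = \sum_n \langle (g^\dagger \circ u)(e_n) \mid f(e_n)\rangle$ is non-negative, so it is enough to bound it: Cauchy-Schwarz in each term, followed by H{\"o}lder over $n$, bounds it by $\big(\sum_n \|(g^\dagger \circ u)(e_n)\|_K^2\big)^{1/2}\big(\sum_n \|f(e_n)\|_K^2\big)^{1/2}$, and both factors are finite --- the second because $f$ is Hilbert-Schmidt, the first because $g^\dagger$ is Hilbert-Schmidt and $u$ is bounded. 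Hence $|g \circ f|$ is trace class, and therefore so is $g \circ f$. (This is the familiar statement that a product of two Hilbert-Schmidt operators is trace class, with the intermediate object carried along explicitly.)

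For the converse, the target is the reverse estimate: extracting the finiteness of $\sum_n \|f(e_n)\|^2$ and of $\sum_m \|g(d_m)\|^2$ from an expansion of $g \circ f$ against the two bases together with the finiteness of $\Tr(|g \circ f|)$. The tidy way to organise this is to note that $f$ is Hilbert-Schmidt exactly when the positive operator $f^\dagger \circ f$ is trace class, and $g$ is Hilbert-Schmidt exactly when $g \circ g^\dagger$ is, so the claim reduces to a comparison among positive operators, where one may use that $0 \le S \le T$ with $T$ trace class forces $S$ trace class. Carrying out this comparison --- dominating $f^\dagger \circ f$ and $g \circ g^\dagger$ by data controlled by $g \circ f$ --- is the main obstacle, and is exactly where the subtle details the paper warns of come in: one has to keep track of the support projections of $f$ and $g$ (on the part of $H$ annihilated by $g$, nothing about $f$ survives in $g \circ f$), so the real work lies in the precise formulation and in the bookkeeping between $H$, $K$ and their relevant subspaces. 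Once that reduction is secured, the remaining inequalities are elementary, being again Cauchy-Schwarz and H{\"o}lder.
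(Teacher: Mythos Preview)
Your forward direction is correct and essentially identical to the paper's: polar decomposition $g\circ f = u\,|g\circ f|$, then $|g\circ f| = u^\dagger g f$, then Cauchy--Schwarz termwise followed by H{\"o}lder over $n$ bounds $\Tr|g\circ f|$ by the product of the two Hilbert--Schmidt norms.

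The converse, however, is simply false, so neither your sketch nor the paper's argument can be completed. Take $H=K$ infinite-dimensional, $f=0$, $g=1_H$: then $g\circ f=0$ is trace class while $g$ is not Hilbert--Schmidt. Your parenthetical remark that ``nothing about $f$ survives in $g\circ f$'' on the kernel of $g$ is pointing at a genuine obstruction, not mere bookkeeping: there is no way to dominate $f^\dagger f$ or $gg^\dagger$ by data built from $g\circ f$ alone. The paper's argument has the same gap --- from the one-sided inequality it illicitly infers an ``if and only if'', and the subsequent claim that $\sum_n\|g^\dagger w(e_n)\|^2<\infty$ is equivalent to $g$ being Hilbert--Schmidt fails whenever the partial isometry $w$ has small range. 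Indeed the paper's own closing counterexample (where $g\circ f$ is trace class but $f\circ g$ is not) already refutes the lemma as stated: were the lemma true, $g\circ f$ trace class would force both $f$ and $g$ to be Hilbert--Schmidt, and then by the valid forward direction $f\circ g$ would be trace class too.
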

\begin{proof}
  By polar decomposition, there is a unique partial isometry $w \colon
  H \to K$ satisfying $g \circ f = w \circ |g \circ f|$ and
  $\ker(w)=\ker(g\circ f)$.  It follows that $|g\circ f|= w^\dag \circ
  g \circ f$. Hence, for an orthonormal basis $(e_n)$ of $H$,   
  \begin{align*}
    \sum_n \big| \inprod{e_n}{|g \circ f|(e_n)} \big|
    & = \sum_n \big| \inprod{e_n}{w^\dag \circ g \circ f(e_n)} \big| \\
    & = \sum_n \big| \inprod{g^\dag \circ w(e_n)}{f(e_n)} \big| \\
    & \leq \sum_n \big( \|g^\dag
    \circ w(e_n) \|^2 \cdot \|f(e_n)\|^2\big)^{1/2}  
    \eqcomment{(by Cauchy-Schwarz)}\\
    & = \sum_n \big| \|g^\dag \circ w(e_n) \| \cdot \|f(e_n)\| \big| \\
    & \leq \big( \sum_n \|g^\dag \circ w(e_n) \|^2 \big)^{1/2} \cdot
               \big( \sum_n \|f(e_n) \|^2 \big)^{1/2}. \eqcomment{(by
                 H{\"o}lder)}
  \end{align*}
  Therefore $gf$ is trace class if and only if $\sum_n
  \|f(e_n)\|^2<\infty$ and $\sum_n \|g^\dag\circ
  w(e_n)\|^2<\infty$. Because $w$ is a partial isometry, the 
  latter inequality holds if and only if $\sum_n
  \|g(e_n)\|^2<\infty$. That is, $g\circ f$ is trace class if and only if
  $f$ and $g$ are Hilbert-Schmidt.
\end{proof}

\begin{proposition}
  The category $\Hilb$ has a dagger trace ideal consisting of the
  usual trace class maps and the usual trace function. 
\end{proposition}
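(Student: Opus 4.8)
The plan is to reduce every clause of the definition of a dagger trace ideal to Lemma~\ref{lem:hilbertschmidt} together with three standard facts about Hilbert--Schmidt maps in $\Hilb$, which I would record first: (i) they form a two-sided, $\dagger$-closed ideal, so precomposing or postcomposing a Hilbert--Schmidt map with a bounded map, or taking its adjoint, again yields a Hilbert--Schmidt map; (ii) they are closed under $\otimes$, since in a product orthonormal basis $\sum_{n,m}\|(a\otimes c)(e_n\otimes e'_m)\|^2 = \big(\sum_n\|a(e_n)\|^2\big)\big(\sum_m\|c(e'_m)\|^2\big)$; and (iii) every trace-class map $f$ factors as $f = a\circ b$ with $a,b$ Hilbert--Schmidt, namely $b = |f|^{1/2}$ and $a = w\circ|f|^{1/2}$ for the polar decomposition $f = w\circ|f|$, because $\sum_n\||f|^{1/2}(e_n)\|^2 = \Tr(|f|) < \infty$.

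Granting these, the dagger endomorphism-ideal axioms would be immediate. For $f$ trace class and $g,h$ of matching types, writing $f = a\circ b$ with $a,b$ Hilbert--Schmidt gives $g\circ f\circ h = (g\circ a)\circ(b\circ h)$, a composite of two Hilbert--Schmidt maps, hence trace class by Lemma~\ref{lem:hilbertschmidt}; writing $f = a\circ b$, $g = c\circ d$ gives $f\otimes g = (a\otimes c)\circ(b\otimes d)$, again such a composite. The zero map is trace class, $\Ideal(I) = \End(\Complex) = \Complex$ because every scalar has finite rank, and $f^\dagger = b^\dagger\circ a^\dagger$ is trace class.

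For the trace axioms, $\Tr_I(s) = s$ is clear and $\Tr_A(f^\dagger) = \overline{\Tr_A(f)} = \Tr_A(f)^\dagger$ follows termwise from $\inprod{e_n}{f^\dagger(e_n)} = \overline{\inprod{e_n}{f(e_n)}}$. Multiplicativity over $\otimes$ I would get by evaluating $\Tr_{A\otimes B}(f\otimes g)$ in a product orthonormal basis, obtaining $\sum_{n,m}\inprod{e_n}{f(e_n)}\inprod{e'_m}{g(e'_m)}$; since $f,g$ are trace class both diagonal series converge absolutely, so the double series factors as $\Tr_A(f)\Tr_B(g)$. For cyclicity, given $f\colon A\to B$ and $g\colon B\to A$ with $g\circ f$ and $f\circ g$ trace class, Lemma~\ref{lem:hilbertschmidt} tells us $f$ and $g$ are Hilbert--Schmidt; expanding in orthonormal bases $(e_n)$ of $A$ and $(e'_m)$ of $B$ and inserting resolutions of the identity,
\[ \Tr_A(g\circ f) = \sum_{n,m}\inprod{e'_m}{f(e_n)}\,\inprod{e_n}{g(e'_m)} = \Tr_B(f\circ g), \]
where reordering the double sum is justified by the H{\"o}lder bound $\sum_{n,m}|\inprod{e'_m}{f(e_n)}|\,|\inprod{e_n}{g(e'_m)}| \le \|f\|_{\mathrm{HS}}\,\|g\|_{\mathrm{HS}} < \infty$.

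I anticipate that the cyclicity clause is the main obstacle: this is the one place where the passage from endomorphisms (covered by textbooks) to morphisms between different objects has genuine content, and it hinges on using that \emph{both} composites being trace class is exactly what forces the Hilbert--Schmidt factorisation (the nontrivial direction of Lemma~\ref{lem:hilbertschmidt}), followed by the absolute-convergence check above so that Fubini applies. The remaining clauses are bookkeeping once facts (i)--(iii) are in hand.
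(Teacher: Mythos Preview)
Your treatment of the endomorphism-ideal axioms matches the paper's and is fine. The cyclicity argument, however, has a real gap. You invoke what you call ``the nontrivial direction of Lemma~\ref{lem:hilbertschmidt}'' to conclude that if both $g\circ f$ and $f\circ g$ are trace class then $f$ and $g$ must themselves be Hilbert--Schmidt. That implication is false: take $H$ infinite-dimensional, $f=\id[H]$ and $g=0$; then both composites are $0$, hence trace class, yet $f$ is not Hilbert--Schmidt. (Indeed the ``only if'' direction of the lemma is already false for a single composite, by the same example; the inequality in its proof only establishes the ``if'' direction, and only that direction is actually used elsewhere in the paper.) Once the Hilbert--Schmidt conclusion on $f$ and $g$ is unavailable, your Fubini step collapses, because the absolute-convergence bound $\sum_{n,m}|\inprod{e'_m}{f(e_n)}|\,|\inprod{e_n}{g(e'_m)}|\le\|f\|_{\mathrm{HS}}\|g\|_{\mathrm{HS}}$ requires both Hilbert--Schmidt norms to be finite.

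The paper handles cyclicity by a genuinely different and deeper route: it invokes Lidskii's trace formula $\Tr(h)=\sum_n\lambda_n(h)$ (eigenvalues counted with algebraic multiplicity) together with the fact that $g\circ f$ and $f\circ g$ have the same nonzero spectrum with the same multiplicities, yielding $\Tr(g\circ f)=\Tr(f\circ g)$ without ever assuming $f$ or $g$ individually is Hilbert--Schmidt. It then reduces the nonseparable case to the separable one by restricting a positive trace-class operator to the orthogonal complement of its kernel. Your elementary double-sum argument is valid in the special case where $f$ and $g$ happen to be Hilbert--Schmidt, but that case does not exhaust the hypothesis of the trace-ideal axiom.
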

\begin{proof}
  That the trace class maps on a Hilbert space $H$ are closed under
  adjoint and tensor products is easily seen. Also, any morphism
  $\Complex \to \Complex$ is trivially trace class. Now suppose that
  $f \colon H \to H$ is trace class. By the previous lemma, we
  can write $f=f_2 \circ f_1$ for Hilbert-Schmidt maps $f_i$. If $g \colon H
  \to K$ and $h \colon K \to H$ are arbitrary morphisms, then $g\circ f_2$
  and $f_1\circ h$ are again Hilbert-Schmidt. Therefore, by the previous
  lemma again, $g\circ f\circ h=(g\circ f_2)\circ (f_1\circ h)$ is
  trace class. Thus trace class maps indeed form an endomorphism ideal.

  One easily sees from the trace property that trace is the identity on scalars, is
  multiplicative on tensor products, and preserves daggers. To prove
  that $\Tr(g\circ f)=\Tr(f\circ g)$ for $f \colon H \to K$ and $g \colon K \to H$
  with both $f\circ g$ and $g\circ f$ trace class, we rely on \emph{Lidskii's
  trace formula} for separable $H$: if $h$ is trace class, then ($\sum_n
  \lambda_n(h)$ is absolutely convergent and) $\Tr(h)=\sum_n
  \lambda_n(h)$, where $\lambda_n(h)$ are the eigenvalues counted up
  to algebraic multiplicity~\cite[Theorem~3.7]{simon:traceideals}. But $g\circ f$ and $f\circ g$ have precisely the same 
  spectrum, so that $\Tr(g\circ f)=\sum_n \lambda_n(g\circ f) = \sum_n
  \lambda_n(f\circ g)=\Tr(f\circ g)$. 

  Finally, we claim that for positive, trace class functions $h \colon
  H \to H$ on any (possibly nonseparable) Hilbert space $H$, Lidskii's
  formula still holds, which finishes the proof that trace class
  operators form a trace ideal, because we may then replace $g \circ
  f$ and $f \circ g$ above by their absolute value.
  Pick an orthonormal basis $\{e_i\}$ for $H$. Since $h$ is trace
  class, $\sum_i \inprod{e_i}{h(e_i)} = \sum_i \| \sqrt{h}(e_i) \|$ is
  summable. Hence $\ker(h)^\perp=\ker(\sqrt{h})^\perp$ can only
  contain countably many 
  $e_i$. Because $h$ is positive, its range is $\ker(h^\dag)^\perp =
  \ker(h)^\perp$. Thus $h \colon H \to H$ restricts to a function $h
  \colon \ker(h)^\perp \to \ker(h)^\perp$ on a separable space.
\end{proof}

We have written the above example out in more detail than the reader
might have thought necessary, because it is easy to overlook
subtleties. For example, it is not true that if $f \colon H \to K$ and
$g \colon K \to H$ are morphisms such that $g\circ f$ is trace class, then
$f\circ g$ is trace class, too. For a counterexample, let
$H=K=\ell^2(\Nat)$, and define $f(x,y)=(0,x)$ and
$g(x,y)=(x,0)$. Then certainly $g\circ f=0$ is trace class. But it is easy
to see that $f^\dag(x,y) = (y,0)$, that $g=g^\dag=g^\dag \circ g$, and hence
that $g=g^\dag \circ g=(f\circ g)^\dag \circ (f\circ g) \geq
0$. Therefore $|f\circ g|=g$, and
\[
     \Tr(f\circ g) 
  = \sum_{m,n} \langle |f\circ g|(e_m,e_n) \mid (e_m,e_n) \rangle
  = \sum_{m,n} \langle e_m \mid e_m \rangle + \langle 0 \mid e_n\rangle 
  = \dim(H) = \infty,
\]
so that $f\circ g$ is not trace class.

\bibliographystyle{plain}
\bibliography{bdbib}

\begin{thebibliography}{10}

\bibitem{abrnocloning}
S.~Abramsky.
\newblock {No-Cloning in Categorical Quantum Mechanics}.
\newblock In S.~Gay and I.~Mackie, editors, {\em Semantic Techniques in Quantum
  Computation}, pages 1--28. Cambridge University Press, 2010.

\bibitem{btm}
S.~Abramsky.
\newblock Big toy models: Representing physical systems as {C}hu spaces.
\newblock {\em Synthese}, 2011.
\newblock Online First, April 2011. Available as arXiv:0910.2393.

\bibitem{abramsky2010relational}
S.~Abramsky.
\newblock {Relational Hidden Variables and Non-Locality}.
\newblock {\em Studia Logica}, 2012.
\newblock Accepted for publication. Available as arXiv:1007.2754.

\bibitem{abramskyetal:nuclear}
S.~Abramsky, R.~Blute, and P.~Panangaden.
\newblock Nuclear and trace ideals in tensored *-categories.
\newblock {\em Journal of Pure and Applied Algebra}, 143:3--47, 1999.

\bibitem{abramsky2011unified}
S.~Abramsky and A.~Brandenburger.
\newblock The sheaf-theoretic structure of non-locality and contextuality.
\newblock {\em New Journal of Physics}, 13(2011):113036, 2011.

\bibitem{abramsky2004categorical}
S.~Abramsky and B.~Coecke.
\newblock A categorical semantics of quantum protocols.
\newblock In {\em Proceedings of the 19th Annual IEEE Symposium on Logic in
  Computer Science}, pages 415--425. IEEE, 2004.

\bibitem{abramsky2008categorical}
S.~Abramsky and B.~Coecke.
\newblock Categorical quantum mechanics.
\newblock {\em Handbook of quantum logic and quantum structures: quantum
  logic}, pages 261--324, 2008.

\bibitem{atiyah:tqft}
M.~F. Atiyah.
\newblock Topological quantum field theories.
\newblock {\em Publications Mathematiques de l'I.H.{\'E}.S.}, 68:175--186,
  1988.

\bibitem{awodey2010category}
S.~Awodey.
\newblock {\em {Category theory}}.
\newblock Oxford University Press, 2010.

\bibitem{barnum2010entropy}
H.~Barnum, J.~Barrett, L.O. Clark, M.~Leifer, R.~Spekkens, N.~Stepanik,
  A.~Wilce, and R.~Wilke.
\newblock Entropy and information causality in general probabilistic theories.
\newblock {\em New Journal of Physics}, 12:033024, 2010.

\bibitem{barnum2007generalized}
H.~Barnum, J.~Barrett, M.~Leifer, and A.~Wilce.
\newblock Generalized no-broadcasting theorem.
\newblock {\em Physical review letters}, 99(24):240501, 2007.

\bibitem{barnum2008teleportation}
H.~Barnum, J.~Barrett, M.~Leifer, and A.~Wilce.
\newblock Teleportation in general probabilistic theories.
\newblock {\em Arxiv preprint arXiv:0805.3553}, 2008.

\bibitem{barnum2010symmetry}
H.~Barnum, R.~Duncan, and A.~Wilce.
\newblock Symmetry, compact closure and dagger compactness for categories of
  convex operational models.
\newblock {\em Arxiv preprint arXiv:1004.2920}, 2010.

\bibitem{Barr79}
M.~Barr.
\newblock {\em $*$-Autonomous categories}, volume 752 of {\em Lecture Notes in
  Mathematics}.
\newblock Springer, 1979.

\bibitem{bell1964einstein}
J.S. Bell.
\newblock {On the {E}instein-{P}odolsky-{R}osen paradox}.
\newblock {\em Physics}, 1(3):195--200, 1964.

\bibitem{benabou:prof}
J.~B{\'e}nabou.
\newblock Distributors at work, 2000.
\newblock Available at
  \texttt{http://www.mathematik.tu-darmstadt.de/}$\sim$\texttt{streicher/FIBR/DiWo.pdf}.

\bibitem{blankexnerhavlicek:hilbert}
J.~Blank, P.~Exner, and M.~Havl{\'i}{\v c}ek.
\newblock {\em Hilbert space operators in quantum physics}.
\newblock Springer, second edition, 2008.

\bibitem{Butz}
C.~Butz.
\newblock Regular categories and regular logic.
\newblock Technical Report LS-98-2, {BRICS}, October 1998.

\bibitem{chiribella2011informational}
G.~Chiribella, G.M. DÕAriano, and P.~Perinotti.
\newblock Informational derivation of quantum theory.
\newblock {\em Physical Review A}, 84(1):012311, 2011.

\bibitem{Chu79}
P.-H. Chu.
\newblock {\em Constructing $*$-autonomous categories}, pages 103--137.
\newblock Volume 752 of {\em Lecture Notes in Mathematics\/} \cite{Barr79},
  1979.

\bibitem{coeckeedwards:spek}
B.~Coecke and B.~Edwards.
\newblock {\em Mathematical Foundations of Information Flow}, chapter Spekkens'
  toy theory as a category of processes.
\newblock American Mathematical Society, 2012.
\newblock arXiv:1108.1978.

\bibitem{coecke2011phase}
B.~Coecke, B.~Edwards, and R.W. Spekkens.
\newblock Phase groups and the origin of non-locality for qubits.
\newblock {\em Electronic Notes in Theoretical Computer Science},
  270(2):15--36, 2011.

\bibitem{coeckeheunen:completelypositive}
B.~Coecke and C.~Heunen.
\newblock Pictures of complete positivity in arbitrary dimension.
\newblock In {\em Quantum Programming Languages}, Electronic Proceedings in
  Theoretical Computer Science, 2011.
\newblock arXiv:1110.3055.

\bibitem{coecke2010compositional}
B.~Coecke and A.~Kissinger.
\newblock The compositional structure of multipartite quantum entanglement.
\newblock {\em Automata, Languages and Programming}, pages 297--308, 2010.

\bibitem{dakicbrukner:entanglement}
B.~Dakic and C.~Brukner.
\newblock {\em Deep Beauty: Understanding the Quantum World through
  Mathematical Innovation}, chapter Quantum Theory and Beyond: Is Entanglement
  Special?, pages 365--392.
\newblock Cambridge University Press, 2011.

\bibitem{dickson1999quantum}
W.M. Dickson.
\newblock {\em {Quantum chance and non-locality}}.
\newblock Cambridge University Press, 1999.

\bibitem{Dirac42}
P.A.M. Dirac.
\newblock The physical interpretation of quantum mechanics.
\newblock {\em Proceedings of the Royal Society of London. Series A,
  Mathematical and Physical Sciences}, 180(980):1--40, 1942.

\bibitem{doplicherroberts:duality}
S.~Doplicher and J.~E. Roberts.
\newblock A new duality theory for compact groups.
\newblock {\em Inventiones Mathematicae}, 98:157--218, 1989.

\bibitem{duncan2010rewriting}
R.~Duncan and S.~Perdrix.
\newblock Rewriting measurement-based quantum computations with generalised
  flow.
\newblock {\em Automata, Languages and Programming}, pages 285--296, 2010.

\bibitem{feynman1987negative}
R.P. Feynman.
\newblock Negative probability.
\newblock In B.J. Hiley and F.D. Peat, editors, {\em Quantum Implications:
  Essays in Honour of David Bohm}, pages 235--248. Routledge and Kegan Paul,
  1987.

\bibitem{fine1982joint}
A.~Fine.
\newblock {Joint distributions, quantum correlations, and commuting
  observables}.
\newblock {\em Journal of Mathematical Physics}, 23:1306, 1982.

\bibitem{garling:inequalities}
D.~J.~H. Garling.
\newblock {\em Inequalities}.
\newblock Cambridge University Press, 2007.

\bibitem{ghezlimaroberts:wstarcategories}
P.~Ghez, R.~Lima, and J.~E. Roberts.
\newblock {$W^*$-categories}.
\newblock {\em Pacific Journal of Mathematics}, 120:79--109, 1985.

\bibitem{giry1982categorical}
M.~Giry.
\newblock A categorical approach to probability theory.
\newblock In {\em Categorical Aspects of Topology and Analysis}, pages 68--85.
  Springer, 1982.

\bibitem{greenberger1990bell}
D.M. Greenberger, M.A. Horne, A.~Shimony, and A.~Zeilinger.
\newblock {Bell's theorem without inequalities}.
\newblock {\em American Journal of Physics}, 58:1131, 1990.

\bibitem{hardy2001quantum}
L.~Hardy.
\newblock Quantum theory from five reasonable axioms.
\newblock {\em Arxiv preprint quant-ph/0101012}, 2001.

\bibitem{hasegawa:trace}
M.~Hasegawa.
\newblock On traced monoidal closed categories.
\newblock {\em Mathematical Structures in Computer Science}, 19:217--244, 2008.

\bibitem{heunen:thesis}
C.~Heunen.
\newblock {\em Categorical quantum models and logics}.
\newblock PhD thesis, Radboud University Nijmegen, 2009.

\bibitem{heunenjacobs:dagkercats}
C.~Heunen and B.~Jacobs.
\newblock Quantum logic in dagger kernel categories.
\newblock {\em Order}, 27(2):177--212, 2010.

\bibitem{jacobs2010convexity}
B.~Jacobs.
\newblock Convexity, duality and effects.
\newblock {\em Theoretical Computer Science}, pages 1--19, 2010.

\bibitem{jauch1968foundations}
J.M. Jauch.
\newblock {\em Foundations of quantum mechanics}.
\newblock Addison-Wesley, 1968.

\bibitem{johnstone1986stone}
P.T. Johnstone.
\newblock {\em {Stone Spaces}}, volume~3 of {\em Studies in Advanced
  Mathematics}.
\newblock Cambridge University Press, 1986.

\bibitem{joyalstreetverity:traced}
A.~Joyal, R.~Street, and D.~Verity.
\newblock Traced monoidal categories.
\newblock {\em Mathematical Proceedings of the Cambridge Philosophical
  Society}, 3(447--468), 1996.

\bibitem{kellylaplaza:compactcategories}
G.~M. Kelly and M.~L. Laplaza.
\newblock Coherence for compact closed categories.
\newblock {\em Journal of Pure and Applied Algebra}, 19:193--213, 1980.

\bibitem{kock:frobenius}
J.~Kock.
\newblock {\em Frobenius algebras and {2-D} Topological Quantum Field
  Theories}.
\newblock Number~59 in London Mathematical Society Student Texts. Cambridge
  University Press, 2003.

\bibitem{longo:crossedproduct}
R.~Longo.
\newblock A remark on crossed product of {C}*-algebras.
\newblock {\em Journal of the London Mathematical Society (2)}, 23:531--533,
  1981.

\bibitem{ludwig1983foundations}
G.~Ludwig.
\newblock {\em Foundations of quantum mechanics}, volume~1.
\newblock Springer-Verlag, 1983.

\bibitem{Mack63}
G.~W. Mackey.
\newblock {\em Mathematical {F}oundations of {Q}uantum {M}echanics}.
\newblock Benjamin, 1963.

\bibitem{masanes2011derivation}
L.~Masanes and M.P. M{\"u}ller.
\newblock A derivation of quantum theory from physical requirements.
\newblock {\em New Journal of Physics}, 13:063001, 2011.

\bibitem{moyal1949quantum}
J.E. Moyal.
\newblock Quantum mechanics as a statistical theory.
\newblock {\em Mathematical Proceedings of the Cambridge Philosophical
  Society}, 45(01):99--124, 1949.

\bibitem{peres1993quantum}
A.~Peres.
\newblock {\em Quantum theory: concepts and methods}, volume~57.
\newblock Kluwer Academic Publishers, 1993.

\bibitem{piron1976foundations}
C.~Piron.
\newblock {\em Foundations of quantum physics}.
\newblock WA Benjamin, Inc., Reading, MA, 1976.

\bibitem{popescu1994quantum}
S.~Popescu and D.~Rohrlich.
\newblock {Quantum nonlocality as an axiom}.
\newblock {\em Foundations of Physics}, 24(3):379--385, 1994.

\bibitem{DBLP:journals/apal/Pratt99}
V.~R. Pratt.
\newblock {C}hu spaces from the representational viewpoint.
\newblock {\em Ann. Pure Appl. Logic}, 96(1-3):319--333, 1999.

\bibitem{rosenthal:quantales}
K.~I. Rosenthal.
\newblock {\em Quantales and their applications}.
\newblock Pitman Research Notes in Mathematics. Longman Scientific \&
  Technical, 1990.

\bibitem{segal:cft}
G.~Segal.
\newblock The definition of conformal field theory.
\newblock In {\em Topology, Geometry and Quantum Field Theory}, volume 308 of
  {\em London Mathematical Society Lecture Note Series}, pages 421--577.
  Cambridge University Press, 2004.

\bibitem{selinger:completelypositive}
P.~Selinger.
\newblock Dagger compact closed categories and completely positive maps.
\newblock In {\em Quantum Programming Languages}, volume 170 of {\em Electronic
  Notes in Theoretical Computer Science}, pages 139--163. Elsevier, 2007.

\bibitem{simon:traceideals}
B.~Simon.
\newblock {\em Trace ideals and their applications}.
\newblock Number 120 in Mathematical surveys and monographs. American
  Mathematical Society, 1979.

\bibitem{simpsonplotkin:fixedpoint}
A.~Simpson and G.~Plotkin.
\newblock Complete axioms for categorical fixed-point operators.
\newblock {\em Logic in Computer Science}, pages 30--41, 2000.

\bibitem{spekkens2007evidence}
R.W. Spekkens.
\newblock Evidence for the epistemic view of quantum states: A toy theory.
\newblock {\em Physical Review A}, 75(3):032110, 2007.

\bibitem{wigner1932quantum}
E.~Wigner.
\newblock On the quantum correction for thermodynamic equilibrium.
\newblock {\em Physical Review}, 40(5):749, 1932.

\end{thebibliography}

\end{document}